\documentclass{article}
\usepackage{arxiv}
% Add additional packages here if required
\usepackage{amsmath}
\usepackage{mathtools}
\usepackage{amsfonts}
\usepackage{amsthm}
\usepackage{booktabs}

% Add additional packages here if required
\usepackage{makecell}
\newcommand{\W}{X^*}
\newcommand{\w}{x^*}
\newcommand{\vw}{v}
\DeclareMathOperator*{\argmax}{arg\,max}
\DeclareMathOperator{\cov}{cov}
\DeclareMathOperator{\Tr}{Tr}
\newtheorem{thm}{Result}

\title{Measurement error and precision medicine: error-prone tailoring covariates in dynamic treatment regimes.}

\date{}

\author{
	Dylan Spicker\\
	Statistics and Actuarial Science\\ 
	University of Waterloo\\ 
	Waterloo, Ontario, N2L 3G1\\
	\texttt{dylan.spicker@uwaterloo.ca} \\
	%% examples of more authors
	\And
	Michael Wallace \\
	Statistics and Actuarial Science\\ 
	University of Waterloo\\ 
	Waterloo, Ontario, N2L 3G1
}

\begin{document}

\maketitle

\begin{abstract}
Precision medicine incorporates patient-level covariates to tailor treatment decisions, seeking to improve outcomes. In longitudinal studies with time-varying covariates and sequential treatment decisions, precision medicine can be formalized with dynamic treatment regimes (DTRs): sequences of covariate-dependent treatment rules. To date, the precision medicine literature has not addressed a ubiquitous concern in health research - measurement error - where observed data deviate from the truth. We discuss the consequences of ignoring measurement error in the context of DTRs, focusing on challenges unique to precision medicine. We show - through simulation and theoretical results - that relatively simple measurement error correction techniques can lead to substantial improvements over uncorrected analyses, and apply these findings to the Sequenced Treatment Alternatives to Relieve Depression (STAR*D) study.
\end{abstract}

\section{Introduction}
Precision medicine is a framework in which medical treatment decisions are based on patient-level data. At its core, precision medicine aims to `treat patients, not diseases', reflecting the principle that the best treatment decision is informed by all relevant, available data on the patient, not solely their diagnosis. This can manifest in the simple case of a single treatment decision (the \textit{one-stage} setting), but can be readily generalized to longitudinal treatment regimes where all available patient-level data (both present and past) can inform treatment decisions. One way of codifying such a process in the precision medicine framework is through the use of Dynamic Treatment Regimes (DTRs): sequences of decision rules tailored to patient-level covariates. Precision medicine in general, and DTRs in particular, have received a great deal of research attention in recent years.\cite{Med_Hamburg,Med_Katsios,Med_Smalley,DTR_text}

A key focus of the DTR framework is estimating the \textit{optimal} sequence of treatment decisions that maximize an expected outcome, conditional on available patient-level data at each decision point. This may be a simple rule based on a single covariate (such as ``prescribe treatment if the patient consumes fewer than $1300$ calories a day''), or may be a highly complex set of treatment decisions which depend on many factors. Finding optimal treatment rules can be especially challenging in the observational data setting, wherein observed treatments may themselves be informed by patient-level information. Estimation of such rules has received considerable attention in the biostatistical literature, with the development of numerous estimation procedures.\cite{DTR_text,Wallace_Overview}

Measurement error refers to any process through which observable data do not equal the true underlying values of interest.\cite{YiBook} Common examples include blood pressure (typically elevated in clinical settings) \cite{BP_ME} or self-reported caloric intake.\cite{Calories_ME} While measurement error may arise through a variety of mathematical mechanisms, the underlying concern is that analyses which do not account for such error may produce unpredictable and unreliable conclusions. These so-called \textit{naive analyses}, along with many relevant correction methods, have been widely studied in both linear \cite{FullerBook} and non-linear \cite{CarrollBook} models.

Despite the abundance of literature surrounding both DTRs and measurement error, there has yet to be a substantial attempt to assess the estimation of the former in the presence of the latter. Precision medicine encompasses both estimation and inference (in studying treatment effects) but also prediction (in applying decision rules to future patients). While estimation and inference have received considerable attention in the measurement error literature (outside of the context of precision medicine), prediction has not. Some argue for the use of error-prone variables to predict an outcome of interest directly. This argument advocates exploiting the dependence between our error-prone variable and the outcome, ignoring the underlying causal relationship between the true variable and the outcome. Then, if there is a sufficiently strong relationship between the outcome and observed measure, prediction may remain valid. This argument is correct for some models,\cite{YiBook} however, it is not always clear when it is valid. It has been shown that even in a standard regression setting, it may be necessary to correct for error to ensure valid prediction.\cite{Pred_Butts}

Whether the goal is to assess the efficacy of a treatment regime, or to aid in future treatment decisions, the literature in standard modeling contexts gives us reason to believe that many of the same concerns will exist in precision medicine, and DTRs specifically. We will investigate these issues, establish when and how precision medicine analyses may be affected, and investigate - through simulation and theoretical results - straightforward measurement error correction techniques within the DTR framework. We further apply our methods to analyze the Sequenced Treatment Alternatives to Relieve Depression (STAR*D) study.\cite{Fava2003,Rush2004}

\section{Methodology}
A wide variety of methods are available both for DTR estimation and measurement error correction. Common techniques for the former include Q-learning \cite{Q_Sutton,Q_Watkins} and inverse probability weighting,\cite{horvitz1952} along with the more complex and robust G-estimation \cite{G_Joffe,G_Robbins} and augmented inverse probability weighting techniques.\cite{AIPW} Dynamic weighted ordinary least squares (dWOLS) offers a compromise between these broad classes of approach, offering robustness to model misspecification while maintaining straightforward implementation.\cite{dwols} Classification-type approaches, such as outcome weighted learning, have also recently grown in popularity.\cite{Classif_Zhang,Classif_Zhao_2012}

The measurement error literature boasts a similar range of options with a familiar trade-off between simplicity of theory and theoretical properties. For example, regression calibration \cite{Carroll1990,Gleser1990} and simulation extrapolation (SIMEX) \cite{Carroll_SIMEX,Stefanski_SIMEX} are both general methods, which make few assumptions regarding the underlying data. This results in comparatively simple estimators which are consistent in linear models, but which can make no general consistency guarantees in non-linear models.\cite{CarrollBook} In contrast, methods such as the conditional score\cite{Cond_Score_1,Cond_Score_2}, under correct distributional assumptions, and the corrected score\cite{Corrected_Score_1,Corrected_Score_2} offer consistent estimators for a larger class of models, at the cost of more complex implementation.

As the first substantive work on the interface between DTRs and measurement error, we will limit our focus to methodologies that afford straightforward implementation. DTR estimation will be carried out via dWOLS,\cite{dwols} whose regression-based implementation is complemented by the measurement error correction method of regression calibration.\cite{Carroll1990,Gleser1990}

To establish notation, and the modeling framework upon which our methodology relies, in this section we will introduce the specifics of a one-stage, error-free DTR, and discuss the estimation procedure using dWOLS. We will then extend to the multistage case. Finally, we will discuss regression calibration generally, clarifying the specific corrections that we will use.

\subsection{DTRs and dWOLS}\label{DTR_subsec}
In a one-stage DTR we make a single binary treatment decision ($A$) per patient. We take $A = a \in \{0,1\}$ to denote some binary treatment option, such as standard treatment ($a=0$) compared to an intervention ($a=1$). We are concerned with an outcome variable $Y$, chosen such that larger values are preferred. Patient information available immediately prior to the decision being made is denoted $\mathbf{X}$. The optimal DTR will then take $\mathbf{X}=\mathbf{x}$ as input for a patient and return $A=a^{\text{opt}}$, such that $Y$ is maximized in expectation. Formally, $a^\text{opt} = \argmax_{a} E[Y|A=a, \mathbf{X}=\mathbf{x}; \beta, \psi]$, where the mean is modeled as
\begin{equation}
	E[Y|A=a, \mathbf{X}=\mathbf{x}; \beta, \psi]= f(\mathbf{x}_{\beta}; \beta) + \gamma(\mathbf{x}_{\psi}, a; \psi) \label{OneStageDTR}
\end{equation}
\noindent with $\mathbf{x}_{\beta}$ and $\mathbf{x}_{\psi}$ representing two (possibly identical) subsets of the covariates $\mathbf{x}$, and $\beta$ and $\psi$ are model parameters. We often take $\mathbf{x} = (1, \mathbf{x})$ to allow for a baseline effect (or intercept) to be captured in $f$. 

This mean decomposition includes a component $f(\mathbf{x}_{\beta};\beta)$ which does not depend on the treatment, and a component $\gamma(\mathbf{x}_{\psi}, a; \psi)$ which captures the effect of treatment. These are the so-called \textit{treatment-free} and \textit{blip} components, respectively. The combination of the treatment-free and blip models (as outlined in Equation (\ref{OneStageDTR})) is referred to as the \textit{outcome model}. The treatment decision impacts the outcome only through the blip. As such, estimation of the optimal DTR is equivalent to finding the decision rule which maximizes $\gamma(\mathbf{x}_{\psi}, a; \psi)$. We therefore only need to estimate the $\psi$ terms correctly to determine the optimal DTR.

Often, we take $\gamma(\mathbf{x}_\psi, a; \psi) = a\psi'\mathbf{x}_\psi$ to be a linear function of the covariates multiplied by the treatment indicator, meaning $a^\text{opt} = 1$ if $\psi'\mathbf{x}_\psi > 0$ and $a^\text{opt} = 0$ otherwise. If we correctly specify the full outcome model then standard regression procedures may be applied. However, as our treatment decision does not depend on the treatment-free component directly, we may wish to seek methodology that does not depend on its correct specification in full. For example, even with the treatment-free model misspecified, we could nevertheless proceed with correct specification of the blip if $a$ and $\mathbf{x}$ were independent, but this is seldom a reasonable assumption in our observational setting (where treatment decisions may be made based on patient-level data).

dWOLS, along with some other methods such as the aforementioned G-estimation \cite{G_Robbins} and augmented inverse probability of treatment weighting,\cite{AIPW} account for this by requiring the specification of a \textit{treatment model}, modeling the probability of receiving the intervention given the individual's covariates. In dWOLS, this allows the calculation of patient-level weights, which we denote $\vw(a, \mathbf{x})$ for a patient with covariates $\mathbf{x}$ receiving treatment $a$. These weights are designed to `balance' the covariates. Any weights which satisfy
\begin{equation}
	\pi(\mathbf{x})v(1, \mathbf{x}) = (1 - \pi(\mathbf{x}))v(0, \mathbf{x}), \label{dwols_weights}
\end{equation}
where $\pi(\mathbf{x}) = P(A=1|\mathbf{X}=\mathbf{x})$, will suffice for balance. The use of $\vw(a, \mathbf{x}) = |a - \pi(\mathbf{x})|$ is recommended.\cite{dwols}

In the one-stage setting, dWOLS is simply a weighted ordinary least squares regression with weights satisfying Equation (\ref{dwols_weights}) and an outcome model structure as indicated by Equation (\ref{OneStageDTR}). The resulting estimators for $\psi$ are then \textit{doubly robust}: as long as the blip model is correctly specified, and at least one of the treatment or treatment-free models is correctly specified, the estimators for $\psi$ will be consistent.

%Determining the optimal DTR through dWOLS then requires specifying the treatment-free, blip, and treatment models. If the blip model is correct, then the resulting estimators for $\psi$ are then \textit{doubly robust}: consistent if at least one of the treatment or treatment-free models is correctly specified.

These methods easily extend to multistage processes. A $K$-stage DTR will have $K$ total treatment decisions made, which we index by $j$. We wish to estimate the optimal decision for stage $j$, given all information available immediately prior to the decision. We now subscript the covariate vector and the treatment decision, $\mathbf{x}_j$ and $a_j$, to denote the measurements taken and the observed decision at stage $j$, respectively. We use over- and under-line notation to refer to the past and future respectively, so that (for example) $\overline{\mathbf{x}}_j = (\mathbf{x}_1, \hdots, \mathbf{x}_j)$ and $\underline{\mathbf{a}}_{j+1} = (a_{j + 1}, \hdots, a_K)$. Finally, for notational convenience, we define a variable to represent the patient's history prior to the stage $j$ treatment decision: $\mathbf{h}_j = (\overline{\mathbf{x}}_j, \overline{\mathbf{a}}_{j-1})$. We now expand Equation (\ref{OneStageDTR}), given the above notation, to
\begin{equation}
	E[Y|\mathbf{H}=\mathbf{h}; \beta, \psi] = \sum_{j=1}^K \left\{f_j(\mathbf{h}_j^\beta; \beta_j) + \gamma_j(\mathbf{h}_j^\psi, a_j; \psi_j)\right\}
\end{equation}
\noindent where $\mathbf{h}_j^\beta$ and $\mathbf{h}_j^\psi$, are (possibly identical) subsets of the history vector $\mathbf{h}_j$. We take $f_j$ to be the treatment-free model for stage $j$, specifying the impact of $\mathbf{h}_j^\beta$ on the outcome. This impact is not mediated by treatment. Conversely, $\gamma_j$ is the blip model for stage $j$, which indicates the impact of $\mathbf{h}_j^\psi$ on the outcome. This effect is mediated by treatment ($a_j$).

The stage $j$ blip function in this multistage setting is defined as the marginal impact of a patient receiving treatment $a_j$, compared to a patient who received standard treatment at stage $j$ ($a_j = 0)$, with an identical history and who goes on to receive optimal, though not necessarily identical, treatment in the future. That is, $\gamma_j(\mathbf{h}_j, a_j; \psi_j) = E[Y^{\overline{\mathbf{a}}_{j}, \underline{\mathbf{a}}_{j+1}^{\text{opt}}} - Y^{\overline{\mathbf{a}}_{j-1}, 0,\underline{\mathbf{a}}_{j+1}^{\text{opt}}}| \mathbf{H}_j = \mathbf{h}_j]$, where $Y^{a^\dagger}$ refers to the \textit{counterfactual} outcome $Y$, which is potentially unobserved, under a treatment regime specified by $a^\dagger$. 

There is an alternative formulation for the outcome model, which provides a more intuitive characterization for interpreting dWOLS. The mean outcome can be defined such that \begin{equation} E[Y|\mathbf{H}=\mathbf{h}] = E[Y^\text{opt}|\mathbf{H}=\mathbf{h}] - \sum_{j=1}^K \mu_j(\mathbf{h}_j, a_j; \psi_j) \label{eq:outcome_regret}, \end{equation} where $Y^\text{opt}$ gives the theoretically optimal outcome (under the optimal DTR) and the $\mu_j$ constitute penalty terms for non-optimal treatment. Here $\mu_j$ are referred to as \textit{regrets}, and are the contrast in outcomes between a patient who receives optimal treatment at stage $j$, and the same patient receiving treatment $a_j$ at stage $j$, assuming the patient is treated optimally thereafter. Formed this way, the observed outcome is equal to the optimal outcome less all negative effects deriving from suboptimal treatment. The regrets may be expressed in terms of the blips, namely $\mu_j(\mathbf{h}_j, a_j) = \gamma_j(\mathbf{h}_j, a_j^\text{opt}) - \gamma_j(\mathbf{h}_j, a_j)$. 

There is a recursive nature to the multistage DTR analysis as the treatment decision at stage $j$ impacts all future decisions. In dWOLS we begin by analyzing the \textit{final} stage of treatment, then work backwards, at each stage generating a \textit{pseudo-outcome} which removes the effects of future treatment from the outcome. Letting $\tilde{y}_K = Y$, we define the $j$-th pseudo-outcome as $\tilde{y}_j = \tilde{y}_{j+1} + \left(\gamma_j(\mathbf{h}_j, a_j^\text{opt}) - \gamma_j(\mathbf{h}_j, a_j)\right) = \tilde{y}_{j+1} + \mu_j(\mathbf{h}_j)$. That is, we `add back' the stage $j$ regret, effectively removing it from the outcome. This allows the pseudo-outcome at stage $j$ to be interpreted as the outcome for a patient who receives their particular regime up to stage $j$, and then is optimally treated afterwards. We could continue to use the blip formulation of the outcome model. In this case we once again take $\tilde{y}_K' = Y$, and then define $\tilde{y}_j' = \tilde{y}_{j+1}' - \gamma_j(\mathbf{h}_j, a_j)$. In practice, the regret setup is more commonly implemented, and we will continue to use it (unless otherwise stated).

Estimation using dWOLS in the multistage setting then follows a three step procedure. First, define weights for each stage $\vw_j$, which satisfy Equation (\ref{dwols_weights}), using $\overline{\mathbf{x}}_j$ as the covariate. Second, starting at stage $K$, and working iteratively backwards, solve the weighted regression of $\tilde{y}_j$ on the patient history $\mathbf{h}_j$. Third, define $\tilde{y}_{j-1}$, and repeat. If the blip and at least one of the treatment or treatment-free models are correctly specified at each stage $j$, this process will lead to consistent estimators for all $\psi_j$.

In order to estimate a DTR we further require two assumptions on our data, whether they are randomized or observational. First, we make the stable unit treatment value assumption (SUTVA).\cite{SUTVA} SUTVA requires that a patient's outcome is not influenced by another patient's treatment assignment. This is typically reasonable, though may be violated, for instance, when the intervention is a vaccine and the effects of herd immunity influence all observed outcomes. Second, we make the no unmeasured confounders, or sequential ignorability, assumption.\cite{NUC} No unmeasured confounders requires that all common causes of treatment (at each stage $j$) and future potential covariates or outcomes must be measured in the history. That is, conditional on the available history, treatment must be independent of future potential covariates and outcome. While this assumption will typically hold in randomized studies, it is untestable in the observational framework, and so must be carefully validated based on the applicable subject matter. We will make these assumptions for the remainder of our discussion.

\subsection{Measurement Error and Regression Calibration}
In order to correct for measurement error we need to make assumptions regarding the structure of the error. If we take $X$ to be the true covariate and $\W$ to be an error-prone observation of $X$, then we assume some form for $\W = g(X,U)$. Here $U$ is the random noise which induces the error. While any specific application may suggest a particular form for the error model ($g$), two commonly used models are the \textit{classical additive}, and the \textit{multiplicative} error models. In the classical additive model we take $g(X,U) = X + U$, and assume that $E[U] = 0$, and that $U$ has constant covariance, not depending on $X$, given by $\cov(U) = \Sigma_U$. Moreover, we assume that $U$ is independent of (or sometimes uncorrelated with) both $X$, and any other covariates we observe (without error) $Z$. In the multiplicative model we take $g(X,U) = XU$, and assume that $E[U] = 1$, and again, $\cov(U) = \Sigma_U$. We make the same independence assumptions. Note, that while the notation above tends to imply that $X$ (and as such $U$) is a scalar, the same models can be extended to vector-valued covariates; we make no distinction, and have selected notation for simplicity of exposition. Both the additive and multiplicative models provide \textit{unbiased} measurements of $X$, in the sense that $E[\W|X,Z] = X$. When we have an outcome of interest, $Y$, we also tend to classify error as either \textit{differential} or \textit{non-differential}. Non-differential error refers to the case where, given $\{X,Z\}$, our outcome $Y$ is conditionally independent of $\W$. Errors may be differential if, for instance, measurements are taken subsequent to the outcome being observed (such as a cancer diagnosis affecting how a patient responds to questions about their historical smoking habits). This is seldom the case in our framework.

Error correction techniques (such as our choice of regression calibration) typically require additional data beyond what is used in standard inferential procedures to learn about the size (and structure) of the error. These may come in the form of validation, replicate, or instrumental data. Validation data consist of a subsample of the observations where both the true and error-prone observations are available. Replicate data consist of repeated measurements of the error-prone covariates for some subset of the individuals. Instrumental data, also known as instrumental variables (IV), refer to additional covariates that are related to the true values, but which are (typically) uncorrelated with both the error observed in the covariate, and the variability in the model after accounting for the true covariates.\cite{CarrollBook} An IV, $T$ is called unbiased for $X$, if $E[T|X,Z]=X$. Replicate measurements can be viewed as a specific type of unbiased instrumental variables. While validation data are typically considered ideal, they are often unavailable in practice. Instead, we focus on the use of unbiased IVs, referred to as error-prone proxies, including replicate measurements.

%In order to make corrections when dealing with non-differential, additive noise we turn to regression calibration. 
The premise of regression calibration is to replace the unobserved $X$ in our models with an estimated $\widehat{X} = E[X|Z,\W]$. We then proceed with standard analysis on the predicted values, adjusting the standard errors as needed. Consider a single patient, with $k$ unbiased proxies of the true covariate, denoted $\W_1, \hdots, \W_k$. A common procedure for determining $\widehat{X} = E[X|Z,\W]$ is taking the so-called \textit{best linear unbiased prediction} (BLUP) approach, which involves approximating the conditional mean as a linear equation.\cite{Carroll1990,Gleser1990} We use a plug-in estimator for the theoretical BLUP, given by \begin{equation}
\widehat{X} = \mu_X + \begin{bmatrix}\Sigma_{XX^*} & \Sigma_{XZ}\end{bmatrix}\begin{bmatrix}
        \Sigma_{X^*X^*} & \Sigma_{X^*Z} \\
        \Sigma_{ZX^*} & \Sigma_{ZZ}
    \end{bmatrix}^{-1}\begin{bmatrix}
        X^* - \mu_X \\
        Z - \mu_Z
    \end{bmatrix}.\label{eq:simplified_form}
\end{equation} Regression calibration was originally proposed for a fairly general class of additive error models, with independent and identically distributed replicate measurements. \cite{Carroll1990} When we have identically distributed replicate measures, it is sensible to take the mean of $\W_1,\hdots,\W_k$ for use as $\W$ in this equation. When the error models differ between proxies, it seems unlikely that the most effective way to combine the proxies is a simple mean. Intuitively, the observations which are less disturbed by error ought to contribute more to $\W$. Our estimators are a generalization of those typically used when replicates are available, \cite{CarrollBook} where we take $\W = \sum_{j=1}^k \delta_j\W_j$, for a set of weights $\delta_j$, choosen either for the interpretation of the estimator or to reduce variability. The specific details of the implementation of our estimators are contained in Appendix \ref{apdx:RC_details}. 

Applying this correction, when errors are additive, results in consistent estimators in linear models. In non-linear models, some authors have described the estimators as ``nearly consistent''.\cite{MEMInteractions} By this they mean that, while we cannot guarantee consistency of all parameters in the models, we can often make claims about reduced bias or consistency of some parameters. For instance, in log-linear models, regression calibration estimators will consistently estimate the slope parameters while inconsistently estimating the intercept.\cite{CarrollBook} Of greatest concern for our work, outside of linear models, is the utility of regression calibration for logistic regression models. Here, the phrasing ``nearly consistent'' is taken to mean that the correction, in many situations, provides a great reduction in the bias of the parameter estimates. Further, when the main concern with the fitted logistic regression is the probability estimates, regression calibration provides a reasonable approximation to the truth, when neither the slope parameters, nor the conditional variance of $X$ given $\{\W,Z\}$ are too large. Specifically, \[P(A=1| Z, \W) \approx H\left[\frac{\alpha_0 + \alpha_X'\widehat{X} + \alpha_Z'Z}{(1 + \alpha_X'\Sigma_{X|Z,\W}\alpha_X/1.7^2)^{1/2}}\right],\] where $H(x) = (1 + \exp(-x))^{-1}$ is the inverse-logit (expit) function, $\alpha$ are the parameters of the logistic regression, and $\Sigma_{X|Z,\W}$ gives the conditional covariance of $X$ given $\{Z,\W\}$.\cite{CarrollBook} In general, the denominator will attenuate the estimates for the model parameters. When the denominator of this approximation is close to $1$, however, this attenuation will be small, and the estimator using regression calibration takes an approximately correct form.

While the corrections we consider are motivated by non-differential classical additive error models, a wider class of error models can be accommodated by the methods we introduce. Any non-differential error model, with (1) more than one unbiased proxy available for $X$, (2) uncorrelated errors between proxies (that is, the covariance between any two unbiased measurements equals the covariance of $X$, $\Sigma_{X}$), and (3) the covariance of all of the unbiased proxies given as $\Sigma_{X} + M$ for suitable constant $M$, can be interchanged if necessary. This would include the multiplicative error model introduced above. However, when errors are multiplicative, it is generally advisable that transformations are applied to the measured values, to use a scale on which the errors are additive. \cite{Multiplicative_Transformations} While the existing asymptotic theory, and most implementations of regression calibration correction, rely on the additive structure, the modified estimators we present are computable under this wider class of models. In our simulation studies (Section \ref{sec:simulations}) we demonstrate that the corrections perform adequately under slight deviations from additivity. Still, if there is good reason to believe that all proxies are subject to non-additive error, which is transformable to an additive model, we would advise that analysts make these transformations, as asymptotic justifications for the presented methods on those models are lacking.

\section{Measurement Error in Dynamic Treatment Regimes}\label{ME_in_DTRs}

%In this section we motivate the use of regression calibration, and discuss some potential shortcomings for identifying optimal DTRs in the presence of error-prone covariates. We then discuss the issue of determining optimal future treatment under our specified error model.

With little work to date concerning measurement error in DTRs and precision medicine, we shall limit attention to the case of errors in patient-level covariates, assuming treatments and outcomes are measured correctly. In this section we will illustrate that, in addition to concerns arising from measurement error that are common to traditional modeling settings, there are considerations which are unique to the structure of DTRs. We discuss these considerations alongside various theoretical observations. 

First, we motivate the construction of valid estimators in the one-stage setting, providing a theoretical guarantee of sample covariate balance in the presence of measurement error. Next, we illustrate how these estimators can be extended to the multistage setting, paying specific attention to the estimation of pseudo-outcomes. We then discuss how confidence intervals may be constructed using a modified bootstrap procedure. Finally, we focus on the process of determining optimal treatments in the future, reframing the estimation as a prediction problem, and discuss the merits of error correction in this context.

%In a one-stage DTR, regression calibration is shown to effectively reduce bias, while in the multistage setting further considerations are required to ensure that valid pseudo-outcomes are constructed. 

\subsection{Blip Parameter Estimation}\label{p_estimation}
To estimate a DTR with dWOLS, we must specify the outcome and treatment models. We assume that a biological process (or similar) relates the true covariate values $X$ to the outcome $Y$, whereas treatment decisions can only be made based on the error-prone observed values $\W$. This structure is shown graphically in Figure \ref{DAG_DTR_Me}. The treatment decision may be based on a single observed proxy, for instance $\W_1$, or on some combination of these proxies, for instance $\W = \frac{1}{k}\sum_{i=1}^k \W_i$.

\begin{figure}
 \centering 
  \includegraphics{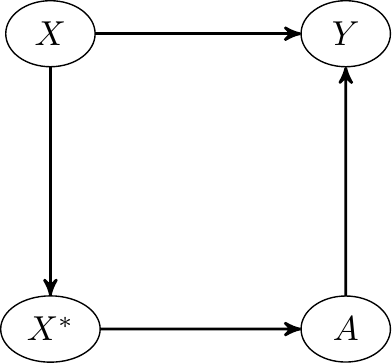}
 \caption{A directed acyclic graph representing the assumed impact of measurement error in the DTR setting. Here $X$ refers to error-free (unobservable) covariates, $X^*$ refers to some combination of the error-prone proxies, $A$ is a binary treatment indicator, and $Y$ is a numeric outcome of interest.}
 \label{DAG_DTR_Me}
\end{figure}

Recall that dWOLS produces doubly robust estimators through the principle of covariate balance. If our treatment-free model is misspecified, then correct specification of the treatment model will induce covariate balance in $X$, (that is $E[X|A=1] = E[X|A=0]$ in the weighted data set), leading to consistent estimation of the blip parameters.\cite{dwols} In the error-prone setting, if we employ regression calibration using $\widehat{X}$ in our outcome models, we wish to induce covariate balance not between $X$ and $A$, but between $\widehat{X}$ and $A$. Following the proof by Wallace and Moodie which justifies the choice of weights,\cite{dwols} we might intuitively speculate that any weights which satisfy $\pi(\widehat{X})\vw(1,\widehat{X}) = (1-\pi(\widehat{X}))\vw(0,\widehat{X})$ will induce covariate balance in $\widehat{X}$. In the error-free setting weights of the form $\vw(a,x) = |a - E[A|X = x]|$ are recommended, suggesting the use of weights of the form $\vw(a, \w) = |a - E[A|\widehat{X} = \w]|$ may prove suitable. We observe the following Result.

\begin{thm}\label{covariate_balance}
	Let $\mathbf{\W} = (\W_1,\hdots,\W_k)$ be observed as a set of unbiased proxies for $X$, and denote the regression calibration estimates of $X$, based on $\mathbf{\W}$, by $\widehat{X}$. Take $P(A=1|L) = \text{H}(\alpha'L)$, where $L$ is any tailoring covariate (vector), which may (but does not necessarily) contain $X$. Using the weights $\vw(A, \widehat{X}) = |A - \hat{P}(A=1|\widehat{X})|$, where $\hat{P}(A=1|\widehat{X})$ is estimated through a simple logistic regression, the weighted sample means, conditional on $A$, will be equivalent. Given the above setup,
    \begin{align*}
    	\frac{\sum_{i=1}^n \vw(1,\w_{ri})A_i\w_{ri}}{\sum_{i=1}^n \vw(1,\w_{ri})A_i} = \frac{\sum_{i=1}^n \vw(0,\w_{ri})(1 - A_i)\w_{ri}}{\sum_{i=1}^n \vw(0,\w_{ri})(1 - A_i)}.
    \end{align*}
\textbf{Proof of Result 1:} See appendix.
\end{thm}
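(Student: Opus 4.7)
The plan is to recognize the identity as a direct finite-sample consequence of the normal equations obtained when the propensity score $\hat{P}(A=1 \mid \widehat{X})$ is fit by maximum likelihood logistic regression of $A$ on $\widehat{X}$ with an intercept. Writing $\hat{\pi}_i = \hat{P}(A_i = 1 \mid \widehat{X}_i)$, the MLE $\hat{\alpha}$ satisfies the two score equations $\sum_{i=1}^n (A_i - \hat{\pi}_i) = 0$ and $\sum_{i=1}^n (A_i - \hat{\pi}_i)\widehat{X}_i = 0$, arising respectively from the partial derivatives of the log-likelihood with respect to the intercept and the slope. These two orthogonality conditions are the only analytical ingredients I will need.

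Next I would exploit the binary structure of $A$ to simplify the weights. Since $A_i \in \{0,1\}$, we have $\vw(1,\widehat{X}_i)A_i = (1-\hat{\pi}_i)A_i$ and $\vw(0,\widehat{X}_i)(1-A_i) = \hat{\pi}_i(1-A_i)$. Subtracting the two denominators in the claimed identity then gives
\begin{align*}
\sum_{i=1}^n (1-\hat{\pi}_i)A_i \;-\; \sum_{i=1}^n \hat{\pi}_i(1-A_i) \;=\; \sum_{i=1}^n (A_i - \hat{\pi}_i) \;=\; 0
\end{align*}
by the intercept score equation, so the two denominators are equal. An identical subtraction on the numerators, with $\widehat{X}_i$ appended inside each summand, reduces to $\sum_{i=1}^n (A_i - \hat{\pi}_i)\widehat{X}_i = 0$ by the slope score equation, so the two numerators are also equal. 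The equality of the two ratios then follows immediately.

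Since the argument collapses to a one-line algebraic reduction, there is no genuine obstacle; the work is entirely in setting up the bookkeeping cleanly. The subtlety worth emphasising in the write-up is that the balance property depends critically on how the weights are constructed: the working propensity model must be fit by logistic MLE including both an intercept and $\widehat{X}$ as a linear predictor, since it is precisely these two score equations that force exact in-sample balance. Notably, the argument does not require the hypothesised treatment mechanism $P(A=1 \mid L) = \text{H}(\alpha' L)$ to be correctly specified, nor does it require $\widehat{X}$ to recover $X$ in any probabilistic sense, so the balance claim is a purely finite-sample, algebraic property of the fitted values $\hat{\pi}_i$ applied to the observed $\widehat{X}_i$.
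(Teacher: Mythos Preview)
Your proposal is correct and is essentially identical to the paper's own proof: both arguments reduce the claim to the two logistic score equations (intercept and slope), observe that with $A_i\in\{0,1\}$ the weights simplify to $A_i(1-\hat\pi_i)$ and $(1-A_i)\hat\pi_i$, and conclude that the numerators and denominators in the two weighted means coincide exactly. The paper's write-up derives the score equations from the log-likelihood explicitly before taking the ratio, but the substance is the same finite-sample algebraic identity you describe.
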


Result \ref{covariate_balance} states that the estimation procedure alone ensures that, within the sample, the weighted means are equivalent between observations with $A=1$ and with $A=0$, regardless of the true underlying treatment model. The idea of using \textit{sample balance} as a small-sample proxy for true balance has been explored in a traditional balancing score setting.\cite{Rosenbaum1983} We will show through simulations that, in many situations, this sample balance suffices to maintain the double robustness of dWOLS. 

When introducing regression calibration we discussed that, in general, the parameters are not consistently estimated in a logistic regression. This discussion is less directly relevant to our present scenario. The reason is that, under our assumed model, the ``true'' covariates used to inform treatment are the observed covariates, which by assumption are $X^*$. While $X^*$ is error-prone with respect to the underlying value of interest, $X$, and as such the outcome model, it is not error-prone with respect to the treatment model, where decisions are informed using $X^*$. As a result, when applying regression calibration to the treatment model, we are using $\widehat{X}$ in place of $X^*$, not in place of $X$. This remains an approximation -- one which is shown in simulations to adequately induce balance -- but not the standard approximation discussed in the literature.

\subsection{Multistage DTRs with Error}\label{multistage_subsec}
Having established an approach for blip parameter estimation in the single-stage problem, we now consider the multistage case, which further requires the estimation of the pseudo-outcomes, $\tilde{y}_i$. If using the regret formulation, then to estimate the pseudo-outcome we must estimate $\hat{a}_j^\text{opt}$, as well as the blip function itself. Consider, for notational ease, a two-stage DTR, which has a linear specification for $\gamma_2(x_2, a_2) = a_2(\psi_{20} + \psi_{21}'x_2)$. Then $a_2^\text{opt} = I(\psi_{20} + \psi_{21}'x_2 > 0)$. In the error-free case, we have $\hat{a}_j^\text{opt} = I(\widehat{\psi}_{20} + \widehat{\psi}_{21}'x_2 > 0)$ rendering $\widehat{\tilde{y}}_1 = Y - \mu_1 - \mu_2 + (\hat{a}_2^\text{opt} - a_2)(\widehat{\psi}_{20} + \widehat{\psi}_{21}'x_2)$. Under the condition that $\widehat{\psi}_{20} = \psi_{20}$ and $\widehat{\psi}_{21} = \psi_{21}$, then $\hat{a}_2^\text{opt} = a_2^\text{opt}$ and $\widehat{\mu}_2 = (\hat{a}_2^\text{opt} - a_2)(\widehat{\psi}_{20} + \widehat{\psi}_{21}'x_2)$, simplifying the estimated pseudo-outcome to $\widehat{\tilde{y}}_1 = Y - \mu_1$, which is the same as the theoretical quantity $\tilde{y}_1$. 

This simplification will not (necessarily) occur in the error-prone case for two reasons. First, even if the $\widehat{\psi}$ are correctly estimated, the use of $\widehat{X}_2$ in place of $X_2$ will result in a residual term between the blip functions. Second, the estimated optimal treatment $\hat{a}_2^\text{opt}$ may differ from the true optimal treatment. Assuming that we have $\widehat{\psi}_{20} = \psi_{20}$ and $\widehat{\psi}_{21} = \psi_{21}$, then we have \begin{equation}\hat{\mu}_2 - \mu_2 = \left(\hat{A}_2^\text{opt} - A_2^\text{opt}\right)\left(\psi_{20} + \psi_{21}\widehat{X}_2\right) + \left(A_2^\text{opt} - A_2\right)\psi_{21}'\left(X_2 - \widehat{X}_2\right). \label{eq:regret_estimation_error}\end{equation} If instead of the regret formulation, we compute the pseudo-outcome as described in the blip formulation (and as such do not need to estimate the optimal treatment), we will be left with the residual term $\gamma_2 - \widehat{\gamma}_2 = A_2\psi_{21}'\left(X_2 - \widehat{X}_2\right)$. While it is not possible, without access to $X_2$ directly, to guarantee that these two sources of error are completely eliminated, we are afforded some flexibility in how they are computed. In particular, if we estimate the blip parameters using the regression calibration correction, and assume that they have been correctly estimated, we could separately choose a covariate, $\W_2$, to use for estimating $\tilde{y}$. 

In the blip characterization $\W_2$ should be chosen in such a way as to minimize $X_2 - \W_2$. Noting that $\widehat{X}_2$ is chosen to be the (linear) estimator of $X_2$ which minimizes the mean squared error (MSE), this gives reasonable justification for selecting $\widehat{X}_2$. It is also worth noting that if $A_2 = 0$ the blip pseudo-outcome is exactly correct. As such, practitioners applying this method with this characterization may wish to consider a regression calibration conditional on $A_2 = 1$ (that is, estimate the BLUP only for those who received second stage treatment), which would minimize the MSE among linear estimators for only those patients who contribute to the biased pseudo-outcomes. 

The regret characterization warrants slightly more involved consideration. The second term in Equation (\ref{eq:regret_estimation_error}) has an impact dictated by $X_2 - \W_2$, as in the blip formulation. The first term, however, relies on a difference of indicator functions. If $\gamma_2 >> 0$ or $\gamma_2 << 0$, such that there is an unambiguous optimal treatment for the individual, then controlling $|\widehat{\gamma}_2 - \gamma_2|$ leads to $\widehat{A}_2^\text{opt} = A_2^\text{opt}$. In this situation, $\widehat{\gamma}_2$ near $\gamma_2$ simplifies to the condition that $\W_2$ is near $X_2$, and so we can once again rely on the justification that $\widehat{X}_2$ minimizes the MSE to motivate the selection of the regression calibration correction. If we have that $|\gamma_2| \leq \epsilon$ for a sufficiently small $\epsilon$, such that the optimal treatment is ambiguous, then it no longer suffices to have $\widehat{\gamma}_2$ near $\gamma_2$ (as even small perturbations between these quantities may lead to $\widehat{A}_2^\text{opt} \neq A_2^\text{opt}$). However, if we do have $\widehat{\gamma}_2$ near $\gamma_2$, then we can also make the claim that $|\widehat{\gamma}_2|$ is small, relatively speaking. The magnitude of the first term in Equation (\ref{eq:regret_estimation_error}) is given by $|\widehat{\gamma}_2|$, therefore, selecting an estimator to be near $\gamma_2$ will ensure that either (1) $\widehat{A}_2^\text{opt}$ is likely to be optimal in the event that there is a large treatment effect, or (2) that the magnitude of the error produced will be small when $\widehat{A}_2^\text{opt}$ is not optimal. This provides a heuristic rationale to use the regression calibration correction in order to estimate the pseudo-outcomes. In order to improve the MSE by conditioning, as was possible in the blip characterization, we would want to limit focus to patients for whom $A_2^\text{opt} \neq A_2$. However, $A_2^\text{opt}$ is not observable, and as such this is not a possible strategy.

There are obvious limitations to this justification. The first is that, in certain settings, it may be possible to derive an estimator which minimizes a loss function on the classification of optimal treatments. Further, restricting consideration to linear estimators of $X_2$ may be ill-advised. Finally, a metric other than MSE may be preferable to measure the distances in this setting. The first issue is a problem that is linked to optimal treatment recommendation (which we investigate briefly in Section \ref{sec:pseudo_correct_subsec}). The second concern extends beyond the estimation of pseudo-outcomes, and in such situations where linear estimators perform poorly, alternative corrections should be considered. There are extensions to regression calibration which provide higher order corrections which may be suitable.\cite{CarrollBook} Finally, where MSE is an inappropriate metric, practitioners of the methodology may be able to solve for an estimator which optimizes the desired metric instead. MSE is a generally applicable metric, which ought to serve well in a wide variety of scenarios. 

In order to perform dWOLS in an error-prone setting, we ultimately recommend computing the regression calibration estimates for all error-prone covariates, and then using these values in the treatment, treatment-free, and blip models, in addition to the estimation of the pseudo-outcomes. This procedure promises consistent parameter estimates under correct model specification, sample covariate balance using the weights, and a heuristic justification for acceptability of pseudo-outcomes. We caution any practitioner applying these methods to be mindful to their particular scenario, ensuring that the structures we have assumed are reasonable, and that our discussions remain valid for their use case.

\subsection{Confidence Intervals and Standard Errors}\label{confidence_intervals}
Regression calibration does not, in general, lend itself to the computation of closed-form variance estimators for the parameters of interest. There do exist derivations for asymptotic standard errors in generalized linear models, however bootstrapped confidence intervals tend to be the preferred solution.\cite{CarrollBook} In the case of dWOLS, there has been little theoretical development on closed-form variance estimators. They have been derived for the single-stage setting, where the authors caution that ``such variance estimates require careful calculation and coding, and so will likely not be practical for the typical analyst.'' indicating that bootstrap procedures seem to perform satisfactorily in their exploratory analyses.\cite{dwols} A modified bootstrap procedure, the \textit{m-out-of-n bootstrap}, was proposed for use in Q-learning to handle non-regularity concerns in the estimation of DTRs.\cite{mn_q_learning} The proposed adaptive procedure for selecting $m$ in Q-learning has been applied, with some success, to dWOLS.\cite{simoneau_mn_dwols} It seems that, where measurement error is a concern, a bootstrap procedure would presently be most suited for estimating confidence intervals for DTR parameters. 

We consider the m-out-of-n procedure, with an adaptive choice of $m$ to construct our intervals. We outline the fundamentals of the algorithm here, and advise the interested reader to consider the existing literature for a deeper exploration.\cite{mn_q_learning, simoneau_mn_dwols} The method performs a standard non-parametric bootstrap, where samples of size $m < n$ are drawn (with replacement), in place of the more conventional $n$. The theory dictates only that $m = o(n)$, and so in the finite sample case, we require a procedure for estimating $m$ from the data. We take \[m = n^\frac{1+\zeta(1-p)}{1+\zeta},\] where both $p$ and $\zeta$ are hyperparameters, selected from the data. The parameter $p$ is a measure of the non-regularity for the model in question, taking values in $[0,1]$. Of note, when $p = 0$, (where we have no regularity concerns), $m = n$ and this method is equivalent to the standard bootstrap. For a fixed value of $n$, $m \in [n^{1/(1+\zeta)}, n]$, and so $\zeta$ can be viewed as a parameter which controls the smallest acceptable re-sample size. 

We use an adaptive approach which estimates both $p$ and $\zeta$ from our data. Consider, for notational simplicity, a two-stage setting. Non-regularity concerns stem from patients for whom small perturbations in covariates lead to different optimal treatment decisions. As such, we take $\hat{p} = \widehat{P}(\widehat{\gamma}_2 = 0)$, which we estimate by considering the proportion of individuals who do not admit a unique optimal treatment decision at the second stage. That is, we construct confidence sets for the second stage blip, and count the proportion of individuals for whom this set contains $0$. To select $\hat{\zeta}$, we use a double-bootstrap procedure. 

We start by setting $\zeta$ to be a small value, and then draw $B_1$ samples of size $n$ from the initial data. Within each of these samples, we estimate $\hat{p}^{(b_1)}$ and the parameters of interest, $\widehat{\psi}^{(b_1)}$. We then conduct an m-out-of-n bootstrap procedure with $B_2$ iterations, using the current value of $\zeta$ and $\hat{p}^{(b_1)}$ to compute $\widehat{m}^{(b_1)}$. We use these $B_2$ resamples to form a confidence interval around the parameters of interest. This is repeated for each of the $B_1$ samples. We then check the nominal coverage probability, counting the proportion of the $B_1$ intervals which contain the initial estimate, and if this is at the desired level, we select the present value of $\zeta$ for $\hat{\zeta}$. Otherwise, we increment $\zeta$ and run the procedure again. The search space for $\zeta$ can be selected as necessary for the application, for instance, restricting the maximum considered value based on the smallest allowable re-sample size. Once $\hat{\zeta}$ and $\hat{p}$ are selected the bootstrap is performed with the estimated $\widehat{m}$.

\subsection{Future Treatments} \label{sec:pseudo_correct_subsec}
While we have focused on the identification of the optimal DTR, an important extension is to consider the implications that measurement error has on future treatment decisions. One consideration is to frame future treatment decisions as a prediction (or classification) problem. In such a framing, our goal is not to correctly estimate the causal parameters, but rather to correctly classify patients into their optimal treatment categories. As previously discussed, it is sometimes argued that measurement error corrections are unnecessary in a prediction setting, though this is not universally applicable.\cite{Pred_Butts} The complexity of DTRs suggest that it is worth considering the utility of error correction for predictions, and studying the effects of error being ignored.

In order to implement error correction for the assignment of future treatments we must consider what information is available when making those decisions. If we have measurements available for the entire population we wish to treat prior to making any treatment decision, we can apply regression calibration directly, and treat based on the imputed values. This setting is distinct from the situation where we are making treatment decisions one at a time, and consequently cannot pool the patients' information in order to apply regression calibration directly. In this setting we instead propose a \textit{pseudo-correction}, where the parameters required to adjust the covariates are made available, from the fitting stage, for use in the prediction stage. It is also conceivable, for instance due to cost constraints, that during the study we have error-prone covariates, while future decisions may be informed by the true covariates. Here, the prediction problem becomes one of predicting across domains.

\section{Simulation Studies}\label{sec:simulations}
We now demonstrate, via simulation, the potential impact of measurement error in the context of DTRs. We emphasize the issues that are present when conducting a naive analysis, and show the feasibility of regression calibration to largely correct for the errors in the analysis, as per our preceding discussions.

\subsection{Parameter Estimation}
We begin by demonstrating the bias present in blip parameter estimates resulting from a naive analysis, and the robustness of our proposed estimation procedures. First, we consider a simple one-stage setup, with $X\sim N(0,1)$, and assume that we observe two proxy measurements, given by $\W_1 \sim X + N(0, 0.25)$, and $\W_2 \sim X + t_{8}$. We assume that the treatment model is given by $P(A=1|\W_1=w) = H(1 - 0.5w + 1.5\exp(w - 1))$, and the outcome model is specified as $Y = X + \exp(X) + A(1+X) + \epsilon$ where $\epsilon \sim N(0,1)$, independent of all other variables. We are interested in estimating blip parameters $\psi_0 = 1$ and $\psi_1 = 1$.

In this setting, we consider four analyses, repeated with and without regression calibration, altering which components of our models are correctly specified. We fit models with (1) neither the treatment nor treatment-free models correctly specified, (2) only the treatment model correctly specified (where the treatment-free is taken to be linear), (3) only the treatment-free model correctly specified (where the treatment model is taken to be linear in the logistic scale), and (4) where both are correctly specified. In all scenarios we simulated $10000$ datasets of size $n=1000$. The results are summarized in Figure \ref{simulation_one_results}. 

\begin{figure}
	\centering
	\caption{Blip parameter estimates (true values $\psi_0 = \psi_1 = 1$ indicated by dashed lines) for 10000 simulated datasets (with $n=1000$), comparing a regression calibration corrected analysis to a naive analysis, when neither (Analysis 1), one of (Analyses 2 and 3), or both (Analysis 4) of the treatment and treatment-free models are correctly specified.} 
	\includegraphics[width=\textwidth]{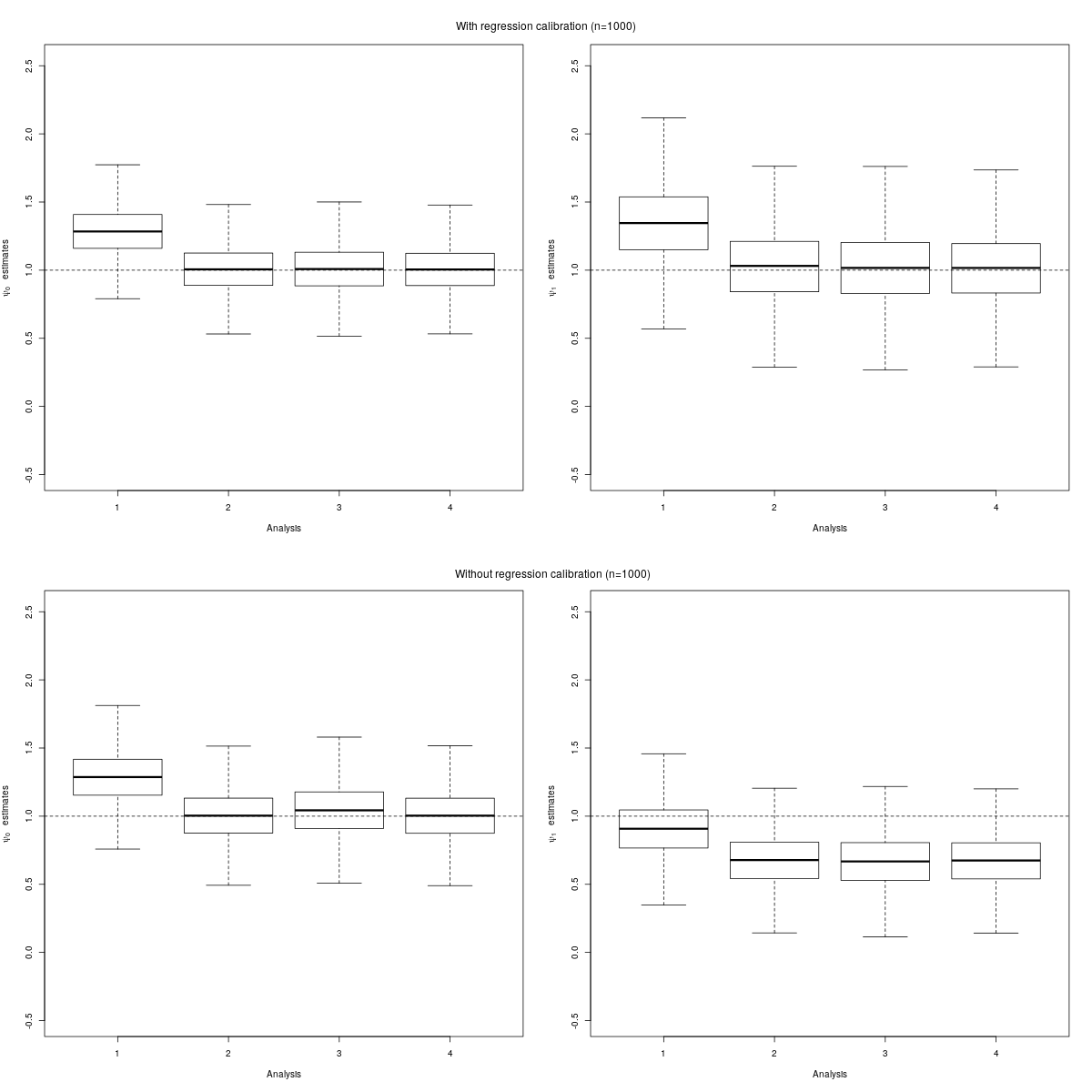}
	\label{simulation_one_results}
\end{figure}

When at least one model is correctly specified (analyses (2)-(4)), the naive estimators of $\psi_0$ perform well. In all four scenarios the naive results are biased for $\psi_1$. Regression calibration results in a clear improvement over the naive estimators in the results for $\psi_1$ across analyses (2)-(4), where the bias is largely removed. There is a clear, though dramatically reduced, bias in analysis (2), where the estimates rely on the correct specification of the treatment model alone. 

We further consider extending these analyses to a variety of two-stage DTR settings, adapted from the original dWOLS paper.\cite{dwols} In addition, we consider a scenario in which various error model combinations were used. We compare the estimation of the relevant parameters using the proposed correction to the parameter estimates obtained under an analysis using a weighted average of the available proxies. For all situations we consider basing treatment on only the first naive proxy, or on the mean of the available proxies. These results are summarized in the appendix, in Tables \ref{tb:multistage_one}-\ref{tb:multistage_five}. 

In general, we see that whether actual treatment decisions are based on a single error-prone covariate, or on the mean of multiple proxies, the correction methods are generally applicable. Across the majority of scenarios, the proposed corrections tend to greatly improve estimates compared to the naive analysis, and yield results which appear broadly consistent. The corrections work well across a variety of error mechanisms, where performance is only materially impacted when using a multiplicative gamma distribution to induce the error. These results confirm our comments regarding the importance of additive error models: the methods are somewhat resilient to these assumptions, but analysts should be careful when there is good reason to suspect a multiplicative model. When the treatment model is badly misspecified, we see notable degradation in the quality of the correction. These simulations suggest that careful consideration must be given to fitting the treatment model. 

\subsection{Coverage Probabilities}
Next, we consider three scenarios to test the applicability of the proposed bootstrap procedure. Due to the computational demands of the adaptive procedure, it is not feasible to conduct a full simulation study, adaptively selecting $\zeta$ for each experiment. Instead, we perform the double-bootstrap procedure once under each of the scenarios, and then consider the m-out-of-n bootstrap for values of $\zeta$ surrounding the selected one. In all three scenarios we take $X_1, X_2 \sim N(0,1)$, and observe two error prone proxies. Scenario 1 takes $\W_{11}, \W_{12} \sim X_1 + N(0,1)$, and $\W_{21}, \W_{22} \sim X_2 + N(0,1)$. Scenario 2 takes $\W_{11} \sim X_1 + N(0,1)$, $\W_{12} \sim X_1 + \text{Unif}(-1,1)$, $\W_{21} \sim X_2 + N(0,1)$, and $\W_{22} \sim X_2\cdot\text{Gamma}(1, 1)$. Finally, scenario 3 takes $\W_{11} \sim X_1 + \text{Unif}(-1, 1)$, $\W_{12} \sim X_1\cdot\text{Gamma}(1, 1)$, $\W_{21} \sim X_2 + N(0, 0.25)$, and $\W_{22} \sim X_2 + \text{Unif}(-1, 1)$. For all three scenarios, we take $P(A_j = 1|\W_{j1}=w) = H(w)$. The outcome for scenarios 1 and 2 is given by $Y = X_1 + X_2 + A_1(1 + X_1) + A_2(1 + X_2) + \epsilon$ where $\epsilon \sim N(0,1)$ independent of everything else. For scenario 3, we introduce an additional binary covariate, $Z_2$, with $P(Z_2 = 1) = 0.5$. We then take $Y = X_1 + X_2 + A_1(1 + X_1) + A_2(1 + X_2 - Z_2 - Z_2X_2) + \epsilon$ where, again, $\epsilon \sim N(0,1)$. Note that, if $Z_2 = 1$ then $\gamma_2 = 0$, meaning that the optimal treatment is not well-defined. 

In the first two scenarios we estimate $\hat{\zeta} = 0.05$, while in the third scenario $\hat{\zeta} = 0.075$. For all scenarios we consider forming bootstrap confidence intervals using (1) a traditional n-out-of-n bootstrap, (2) an m-out-of-n bootstrap where $\zeta=0.05$ is used in the adaptive procedure, and (3) an m-out-of-n bootstrap where $\zeta = 0.10$ is used in the adaptive procedure. For the third scenario, we also include an m-out-of-n bootstrap where $\zeta=0.075$ is used. The coverage probabilities are contained in Table \ref{coverage_sim_results}. We see that the standard bootstrap procedure attained the nominal coverage in all settings. Taking the selected $\widehat{\zeta}$ met the nominal coverage levels in the second scenario, and was slightly conservative for the first and third scenarios, where taking $\zeta = 0.10$, we obtained mostly conservative intervals. In the third scenario, all procedures tended to produce conservative results.

\subsection{Future Treatment Predictions}
To investigate future treatment assignment, we consider a two-stage DTR where $X_1 \sim N(0,1)$ and $X_2 \sim N(A_1, 1)$ are the true covariates, with replicate observations $\W_{11} \sim X_1 + t_{10}$, $\W_{12} \sim X_1 + N(0,1)$ and $\W_{21}, \W_{22} \sim X_2 + N(0, 0.25)$. The outcome is given by $Y = X_1 - (A_1^\text{opt} - A_1)(1 - X_1) - (A_2^\text{opt} - A_2)(3 - 2X_2) + \epsilon$, with $\epsilon \sim N(0,2)$ independent of all other parameters. The treatment models take the form $P(A_i = 1|\W_{i1} = w) = H(1 - w)$. 

We partition these analyses into three settings based on the information available at the time treatment decisions (or predictions) are made. Namely, where we have access to (1) error-prone measurements for only one patient at a time, (2) error-prone measurements for all patients at once, and (3) the true covariate values for prediction. For these scenarios we consider the performance of the naive model compared to the corrected model. For the first setting, direct regression calibration is not possible. Instead, we conduct the pseudo-correction described in Section \ref{sec:pseudo_correct_subsec}, to produce a corrected estimate. We provide the results for this correction applied when we measure both proxies and when we only measure a single proxy. In the second setting, we do not fit the naive model (as it is equivalent to scenario (1)). Each analysis above is run with $n=1000$ individuals during the fitting stage, and the treatment assignment is run for $5000$ individuals. We repeat the set of simulations $10000$ times. 

\begin{figure}
	\centering
	\caption{Proportion of optimally treated individuals for 10000 simulated datasets ($n=1000$), comparing the predictive power of naive versus corrected analysis when we (1) make future treatment decisions one-at-a-time, (2) have recorded information in order to pool patient information, or (3) have the true measurements available to inform the treatment decision. When single treatment decisions are made, we use a pseudo-correction based on all error-prone proxies (corrected*) or on a single error-prone proxy (corrected**), using the same parameter estimates for the pseudo-correction in all three cases. Ranges of optimally treated individuals are shown at stage one and stage two, separately.}
	\includegraphics[width=\textwidth]{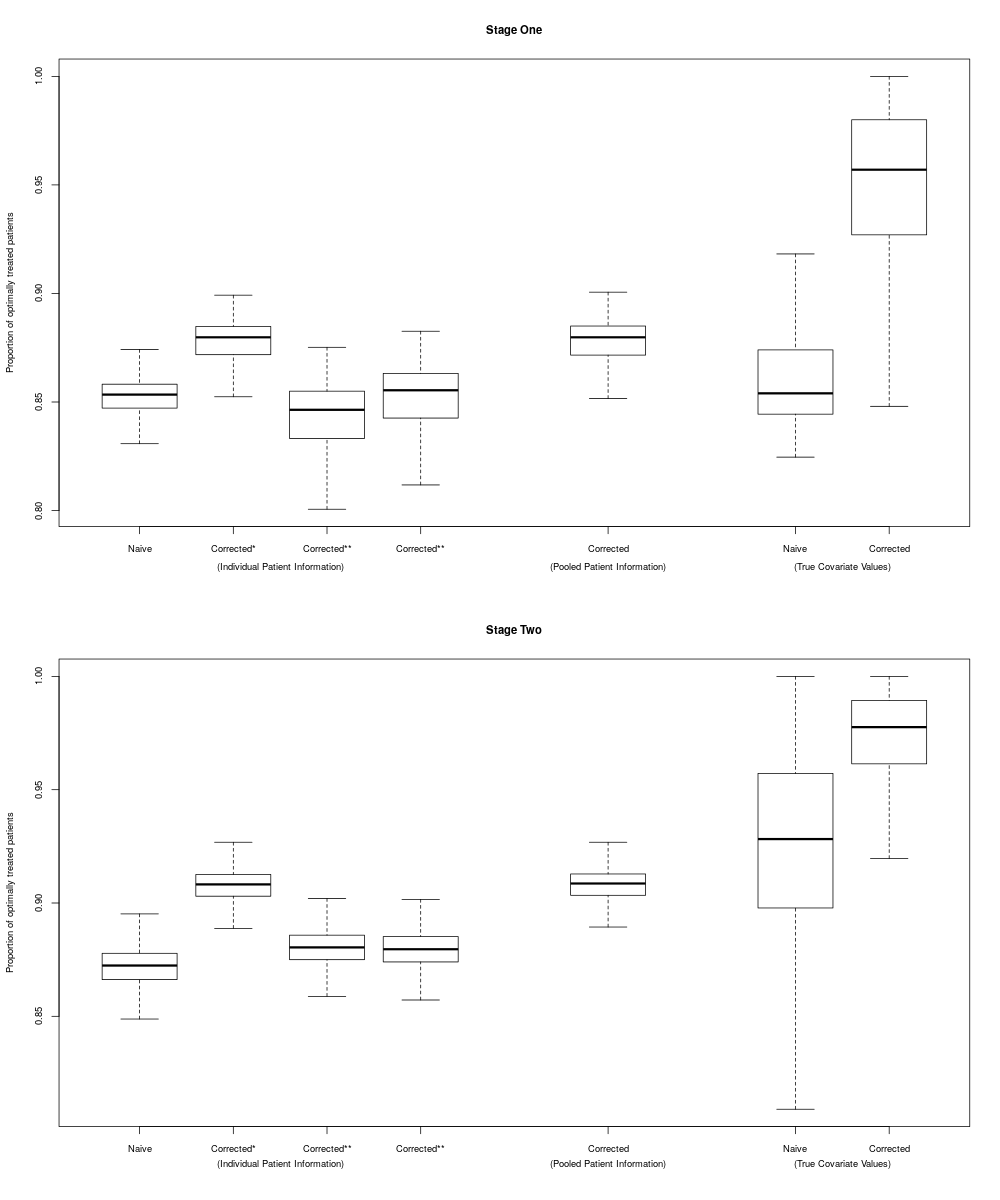}
	\label{simulation_three_results}
\end{figure}

The complete results are provided in Figure \ref{simulation_three_results}, where, on average, the corrected methods perform better than the naive methods in terms of accuracy. The results suggest that, in the worst case scenario, framed as a problem of prediction, the naive and corrected methods perform comparably. However, there are dramatic gains in terms of optimality of treatment in the event that additional information is available when assigning future treatments. At stage one, the pseudo-correction performs favorably using only the second proxy, as compared to only the first proxy, as a result of its lower variance and higher weight while fitting the estimator. This suggests that if the pseudo-correction is to be used, and only one proxy will be made available, the proxy with the lowest variance is preferred.

\section{STAR*D study}\label{STARD}
We now illustrate the proposed correction methods through application to data from the Sequenced Treatment Alternatives to Relieve Depression (STAR*D) study. The STAR*D study was a multistage randomized controlled trial, comparing different treatment regimes for patients with major depressive disorder.\cite{Fava2003, Rush2004} The study was split into four phases (with phase two further subdivided into two sub-phases) where, at each phase, different treatment options were available to patients based on preference and progression through the study. The severity of depression was measured through the Quick Inventory of Depressive Symptomatology (QIDS) score, where assessment was conducted during each phase both by the patient (denoted QIDS-S) and by a clinician (denoted QIDS-C). At the end of each study phase, patients who had a clinician assessed QIDS score less than or equal to $5$ were considered to have entered remission, and were subsequently removed from the study. At phase 1, all patients were prescribed citalopram. At the end of phase 1, those who did not enter remission entered the second stage where seven treatment options were available: this phase was characterized by `switching' from citalopram to one of four other treatments options, or `augmenting' treatment by receiving citalopram alongside one of three new treatments. Those who had still not entered remission entered a third (and possibly fourth) phase, where treatment was again switched our augmented with a variety of possible options. Full details of the study, and of the treatment options, are described elsewhere.\cite{Rush2004} 

The first and fourth phases of the trial are typically ignored in DTR analysis of these data. We focus on phases two (merging both sub-phases) and three, which we refer to as stage one and stage two, respectively. Previous analyses \cite{Chakraborty2013} specified QIDS-C as an outcome, and dichotomized treatments differentiating those which contain a selective serotonin reuptake inhibitor (SSRI) and those which do not. These analyses model QIDS-C as a continuous covariate and consider three tailoring variables: QIDS-C measured at the start of each level (given by $Q_j$ for stage $j$), the change in QIDS-C divided by the elapsed time over the previous level (referred to as QIDS slope, denoted $S_j$ for stage $j$), and patient preference (denoted $P_j$ for stage $j$), a binary indicator specifying whether the patient desired to switch treatment regimes ($P_j = 1$) or augment ($P_j = 0$). Treatment is coded as $A_j = 1$ if the stage $j$ treatment includes an SSRI, and a $0$ otherwise. Our analysis considers 283 patients, who had all stage one and two covariates measured. The outcome was taken to be $Y = -\frac{1}{2}\left(\text{QIDS-C}_{1} + \text{QIDS-C}_{2}\right)$, where $\text{QIDS-C}_{j}$ is the clinician rated QIDS score at the end of stage $j$. 

Existing analyses of these data make the implicit assumption that clinician scores are error-free measurements. However, the inclusion of self-assessed measures (QIDS-S), offers a feasible mechanism for exploring measurement error in this study. If we postulate that there exists a true underlying symptom score for every patient, then we might propose that both the self-assessed and the clinician scores are surrogate measures for this truth, permitting regression calibration. We note that our analysis continues to use QIDS-C as the outcome variable, to remain comparable with previous literature. 

\subsection{Model Fitting and Comparison}
We consider fitting the model using only the clinician ratings, only the self-reports, or using the correction where they are considered to be error-prone proxies. Following previous analyses of the data, we pose a first stage treatment model using only first stage preference ($P_1$) and a second stage treatment model using only second stage preference ($P_2$). For the first stage, the treatment-free and blip models are linear in preference ($P_1$), slope ($S_1$), and initial QIDS score ($Q_1$). At the second stage, the treatment-free model is linear in preference ($P_2$), slope ($S_2$), starting value ($Q_2$), as well as stage one treatment ($A_1$). The blip model used only slope ($S_2$) and starting value ($Q_2$). For each of the settings we conducted an m-out-of-n bootstrap, choosing $m$ using the outlined adaptive procedure. Table \ref{stard_results} contains the results for parameters estimates and $95\%$ confidence intervals.

Previous analyses have found that the only significant treatment effect was the interaction between stage one treatment and preference ($A_1P_1$), \cite{mn_q_learning} a result that is replicated on our subset of the data when using only clinician scores. If instead we assume that the self-reported scores represent the true values, we find a significant treatment effect at stage two, with the interaction between treatment and slope ($A_2S_2$). However, if we perform our correction, neither of these effects remains significant, and we lack evidence for any significant treatment effects. This may be due to increased uncertainty from the error, but it nevertheless suggests further consideration is required.

\section{Discussion}
Dynamic treatment regimes provide a powerful framework for characterizing treatment pathways. The theory surrounding optimal DTR estimation is well-developed, with numerous methods available. Measurement error is a pervasive issue in many scenarios and has consequently received extensive attention. However, there has been no substantive work which investigates DTRs in an error-prone setting. The errors arising due to measurement error in the DTR framework are somewhat unique, as the treatment-free, treatment, and blip models are all affected by measurement error separately. The treatment model no longer depends on the true covariate values, as these are unobserved, adding extra complexity to the modeling procedure. Additionally, fitting a DTR often requires patient weights designed to induce covariate balance between a patient's treatment and their covariates. This balance may not be guaranteed in the presence of measurement error. 

Despite these additional considerations, since DTRs can be effectively estimated through regression methods, and since measurement error has been well-studied in regression frameworks, extant error correction methods offer a feasible solution. We investigated the use of regression calibration to correct for covariate measurement error, in the tailoring covariates, in DTRs with a continuous outcome. We have demonstrated that the application of regression calibration within the dWOLS analysis framework is an effective technique for substantially reducing the bias present in an analysis that does not attempt to account for measurement error. Further, the estimates tended to exhibit desirable behavior across a wide variety of settings, largely preserving the doubly robust property of dWOLS.

The multistage setting poses additional considerations due to the need to estimate the sequence of pseudo-outcomes. Even when the blip parameters are correctly estimated there is residual bias when constructing the pseudo-outcome from any error-prone proxy. We argue that using the regression calibration correction to form the pseudo-outcomes can be justified, using MSE as a metric across the class of linear estimators for the true covariates. While this argument is largely heuristic, the simulations which were conducted tend to confirm that this procedure often suffices to correct the parameter estimates. There is opportunity for further theoretical development of these concepts, such as considering what metrics are best-suited for the assessment of pseduo-outcome estimates, both when covariates are subject to error and when they are not.

The m-out-of-n bootstrap with an adaptive choice for $m$, as proposed to handle non-regularity in the case of error-free DTRs, seems promising as a mechanism for producing nominal, or at worst conservative, confidence intervals while using the proposed correction. This is in line with the related theory in both the measurement error and DTR literature, where the m-out-of-n bootstrap has shown to be effective for dWOLS, and bootstrap methods are broadly useful with regression calibration. While there is substantial room for the development of rigorous confidence intervals using dWOLS, both in the error-prone and error-free settings, the recommended m-out-of-n procedure seems to be well-suited for the task.

While it may seem plausible to view the process of estimating a DTR to form a treatment rule for future patients as a question of prediction, removing the need for error correction methods, we have demonstrated that the naive estimates are more highly variable and tend to provide lower rates of optimal treatment as compared to a corrected analysis. Further, when there is the possibility of observing error-free patient information when making future treatment decisions, the gain from using regression calibration in the study is substantial, in terms of the proportion of optimally treated individuals. 

Finally, we demonstrated our methods through analysis of the STAR*D study data, where additional error-prone proxies were available due to the self-reported QIDS scores, which have typically been ignored in prior analyses of this dataset. Accounting for errors alters the estimated optimal treatment rules, which may be attributable to the increased uncertainty that measurement error induces, though it suggests that the true optimal decision rules likely require further investigation to be correctly identified.

The resolution of measurement error for DTRs is a complex challenge. The approaches proposed in this work rely on a number of assumptions that may not hold in practice, such as the availability of unbiased instrumental data or the applicability of the error models explored. Nevertheless, we have demonstrated both the potential impact of measurement error on so-called naive analyses, and the ease with which considerable gains can be made through the application of comparatively straightforward analytical techniques. More complex methods (both from the DTR and measurement error literatures) may elicit further improvements, and we anticipate pursuing these in future work. Beyond this, errors in the treatment and outcome variables must also be considered, both of which may be investigated as natural extensions to the results presented here.

\section*{Acknowledgments}
This work was funded by a Natural Sciences and Engineering Research Council (NSERC) Discovery Grant. The data for the STAR*D study are available from the National Institute of Mental Health (NIMH). Restrictions apply to the availability of these data, which were used under license for this study. Data are available through the NIMH Data Archive (NIMH NDA ID: 2148). We are grateful for the feedback from the reviewers and editors on an early draft of this manuscript. The provided comments contributed important developments to the methodologies presented.

\section*{Data Availability Statement}
All of the R code used to run the simulation studies and sensitivity analyses is freely available online at: https://github.com/DylanSpicker/measurement-error-DTRs. The data for the STAR*D study are available from the National Institute of Mental Health (NIMH). Restrictions apply to the availability of these data, which were used under license for this study. Data are available through the NIMH Data Archive (NIMH NDA ID: 2148).

% \section*{Conflict of Interest}

% \printendnotes

% Submissions are not required to reflect the precise reference formatting of the journal (use of italics, bold etc.), however it is important that all key elements of each reference are included.

\bibliographystyle{unsrt}
\bibliography{references}

\pagebreak

\begin{table}[h]
    \centering
    \caption{Results for the coverage probabilities derived from bootstrap procedures across three scenarios. Intervals formed using an n-out-of-n ($\text{nn}$) or an m-out-of-n bootstrap, based on the adaptive procedure with $\zeta$ ($\text{mn}_\zeta$). Each set was constructed using $2000$ bootstrap replicates, and the experiment was repeated $500$ times. Bolded values indicate those which deviate significantly from the nominal coverage of $0.95$. Intervals are shown for treatment at both stages ($A_j$), as well as treatment interactions with the error-prone covariates ($X_j$) and the error-free covariate $Z_2$. Only scenario 3 used the error-free covariate.}
    \begin{tabular}{rrrrrrrrrrrrr}
    \toprule & \multicolumn{3}{c}{Scenario One} & & \multicolumn{3}{c}{Scenario Two} & & \multicolumn{4}{c}{Scenario 3} \\
    & $\text{nn}$ & $\text{mn}_{.05}$ & $\text{mn}_{.10}$ & & $\text{nn}$ & $\text{mn}_{.05}$ & $\text{mn}_{.10}$ & & $\text{nn}$ & $\text{mn}_{.075}$ & $\text{mn}_{.05}$ & $\text{mn}_{.10}$ \\
    \toprule $A_1$ & 0.937 & 0.958 & \textbf{0.970} & & 0.950 & 0.962 & \textbf{0.972} & & \textbf{0.97} & \textbf{0.98} & \textbf{0.98} & \textbf{0.98} \\
    $A_1X_1$ & 0.964 & \textbf{0.970} & \textbf{0.979} & & 0.950 & 0.958 & 0.962 & & \textbf{0.98} & \textbf{0.99} & \textbf{0.99} & \textbf{0.99} \\
    $A_2$ & 0.961 & \textbf{0.970} & \textbf{0.979} & & 0.950 & 0.964 & \textbf{0.980} & & \textbf{0.97} & \textbf{0.99} & \textbf{0.99} & \textbf{1.00} \\
    $A_2X_2$ & 0.940 & 0.961 & \textbf{0.973} & & 0.952 & 0.956 & 0.960 & & 0.96 & \textbf{0.99} & \textbf{0.98} & \textbf{0.99} \\
    $A_2Z_2$ & -- & -- & -- & & -- & --  & -- & &  \textbf{0.97} & \textbf{0.98} & \textbf{0.98} & \textbf{0.98} \\
    $A_2X_2Z_2$ & -- & -- & -- & & -- & --  & -- & & 0.97 & \textbf{0.99} & \textbf{0.98} & \textbf{0.99} \\
    \hline
    \end{tabular}
    \label{coverage_sim_results}
\end{table}

\begin{table}[ht]
	\caption{Results for the two-stage blip coefficient estimates comparing an analysis employing the regression calibration correction to naive analysis using only the clinician or self-reported data. Confidence intervals are computed based on 2000 m-out-of-n bootstrap replicates, where $m$ was chosen based on the described adaptive procedure. Bolded values indicate treatment effects which are significant at a $95\%$ level. $A_j$ refers to the treatment indicator ($1$ for those with an SSRI, $0$ otherwise), $P_j$ refers to patient preference to switch ($1$ with a preference to switch, $0$ with a preference to augment), $Q_j$ refers to the starting QIDS score at stage $j$, and $S_j$ the slope of the QIDS score over the $j$-th phase.}
    \centering
    \begin{tabular}{lcccccc}
        \toprule & \multicolumn{2}{c}{Error Corrected} & \multicolumn{2}{c}{Clinician Score} & \multicolumn{2}{c}{Self-Reported} \\ 
        \cmidrule(lr){2-3} \cmidrule(lr){4-5} \cmidrule(lr){6-7}  
        Parameter & Estimate & $95\%$ CI & Estimate & $95\%$ CI & Estimate & $95\%$ CI \\
        \toprule
        \multicolumn{7}{l}{\rule{0pt}{3ex} Stage One} \\
        $A_1$ & -0.75 & (-10.035, 7.934) & -0.48 & (-6.279, 5.486) & 1.35 & (-3.784, 6.052) \\
        $A_1P_1$ & 2.72 & (-0.186, 5.819) & \textbf{2.99} & \textbf{(0.898, 5.426)} & 2.76 & (-0.199, 5.826)  \\
        $A_1Q_1$ & 0.06 & (-0.569, 0.701)& 0.07 & (-0.346, 0.456) & -0.09 & (-0.409, 0.253)  \\
        $A_1S_1$ & -1.54 & (-6.895, 2.246) & -1.04 & (-3.768, 1.075) & -0.55 & (-2.312, 0.887) \\
        \multicolumn{7}{l}{\rule{0pt}{3ex} Stage Two}\\
        $A_2$  & -0.31 & (-7.045, 6.954) & 1.19 & (-2.88, 5.521) & -0.04 & (-4.262, 4.078) \\
        $A_2Q_2$ & 0.09 & (-0.479, 0.632) & -0.02 & (-0.371, 0.3) & 0.08 & (-0.219, 0.383) \\
        $A_2S_2$ & 1.82 & (-2.789, 4.833) & 0.94 & (-0.82, 2.696) & \textbf{2.74} & \textbf{(0.297, 5.137)} \\
       \hline
    \end{tabular}
	\label{stard_results}
\end{table}

\pagebreak

\appendix
\section{Regression Calibration: Details} \label{apdx:RC_details}
We first consider defining the optimal weights $\delta_j$. In our investigation, we have paid most attention to three separate sets of weights: (1) using $\delta_j = \frac{1}{k}$ for all $k$ proxy measurements, (2) viewing $\W$ as an estimator for $X$, and minimizing the variance of that estimator, which gives $\delta_j = \Tr\left(M_j\right)^{-1}\left[\sum_{l=1}^k \Tr\left(M_l\right)^{-1}\right]^{-1}$, or (3) treating $\delta_j$ as parameters in the BLUP and solving, which gives the form $\delta_j = \Tr\left(\beta'\beta M_j\right)^{-1}\left[\sum_{l=1}^k \Tr\left(\beta'\beta M_l\right)^{-1}\right]^{-1}$, which we can solve numerically. Here $M_j$ refers to the $j$-th matrix in the covariance term. Solving for the BLUP gives us Equation (\ref{eq:simplified_form}). To implement each of these, we advocate for simple plug-in estimators for each of the corresponding quantities, most of which can be readily derived through ANOVA-style calculations. Unless otherwise stated, we use 
\begin{align*} 
    \overline{X^*_{\cdot j}} &\equiv \frac{1}{n}\sum_{i=1}^n X^*_{ij}\\
    \overline{X_i^{*(j)}} &\equiv \frac{1}{k-1}\sum_{\substack{l=1 \\ l\neq j}}^k X^*_{il} \\
    \widehat{\Sigma}_{X_j^*} &= \frac{1}{n-1}\sum_{i=1}^n \left(X^*_{ij} - \overline{X^*_{\cdot j}}\right)\left(X^*_{ij} - \overline{X^*_{\cdot j}}\right) \\
    \widehat{M} &= \frac{k-1}{kn}\sum_{i=1}^n\sum_{j=1}^k \left(X^*_{ij} - \overline{X^{*(j)}_{i}}\right)\left(X^*_{ij} - \overline{X^{*(j)}_{i}}\right)' \\
    \widehat{\Sigma}_{XX}^{(1)} &= \frac{1}{k}\left\{\sum_{j=1}^k \widehat{\Sigma}_{X_j^*} - \widehat{M}\right\} \\
    \widehat{M}_j &= \widehat{\Sigma}_{X_j^*} - \widehat{\Sigma}_{XX}^{(1)}\\
    \widehat{\mu}_X = \widehat{\mu}_{X^*} &= \frac{1}{n}\sum_{i=1}^n X_i^* = \frac{1}{n}\sum_{i=1}^n\sum_{j=1}^k \delta_jX^*_{ij} \\
    \widehat{\mu}_Z &= \frac{1}{n}\sum_{i=1}^n Z_i \\
    \widehat{\Sigma}_{ZZ} &= \frac{1}{n-1}\sum_{i=1}^n \left(Z_i - \overline{Z}\right)\left(Z_i - \overline{Z}\right)' \\
    \widehat{\Sigma}_{XZ} = \widehat{\Sigma}_{X^*Z} &= \frac{1}{n-1}\sum_{i=1}^n\left(X^*_{i} - \overline{X^*}\right)(Z_i - \overline{Z})' \\
    \widehat{\Sigma}_{X^*} &= \frac{1}{n-1}\sum_{i=1}^n \left(X^*_i - \overline{X^*}\right)\left(X^*_i - \overline{X^*}\right)' \\
    \widehat{\Sigma}_{XX}^{(2)} &= \widehat{\Sigma}_{X^*} - \sum_{j=1}^k\delta_j^2\widehat{M}_j.
\end{align*}

\section{Proofs of Result 1}
\label{Result_1_Proof}
\begin{proof}[Proof of Result \ref{covariate_balance}]
    Consider an arbitrary logistic regression, explaining $A$ with respect so some covariate $T$. We will assume, WLOG, that $T$ is univariate to suppress vector notation, though the same argument holds for vector-valued covariates. Assume that the model is fit with a sample of $(A_1, t_1), \hdots, (A_n, t_n)$. Computing a maximum likelihood estimate for $P(A=1|T)$ results in solving 
    \[
        \frac{\partial\ell}{\partial\alpha_0} = \sum_{i=1}^n \frac{A_i - (1 - A_i)\exp(\alpha_0 + \alpha_1t_{i})}{1 + \exp(\alpha_0 + \alpha_1t_{i})}
    \] 
    and
    \[
        \frac{\partial\ell}{\partial\alpha_1} = \sum_{i=1}^n t_{i}\frac{A_i - (1 - A_i)\exp(\alpha_0 + \alpha_1t_{i})}{1 + \exp(\alpha_0 + \alpha_1t_{i})}
    \] 
    equal to $0$. Solving these simultaneously results in 
    \[
        \sum_{i=1}^n \frac{A_i}{1 + \exp(\alpha_0 + \alpha_1t_{i})} = \sum_{i=1}^n  \frac{(1 - A_i)\exp(\alpha_0 + \alpha_1t_{i})}{1 + \exp(\alpha_0 + \alpha_1t_{i})}
     \] 
     and
     \[
        \sum_{i=1}^n \frac{t_{i}A_i}{1 + \exp(\alpha_0 + \alpha_1t_{i})} = \sum_{i=1}^n \frac{t_{i}(1 - A_i)\exp(\alpha_0 + \alpha_1t_{i})}{1 + \exp(\alpha_0 + \alpha_1t_{i})}.
    \]
    
    The above expressions simplify to:
    \begin{align*}
    	\sum_{i=1}^n A_i(1-\widehat{P}(A=1|T=t_i)) &= \sum_{i=1}^n (1-A_i)\widehat{P}(A=1|T=t_i) \\
        \sum_{i=1}^n t_{i}A_i(1-\widehat{P}(A=1|T=t_i)) &= \sum_{i=1}^n t_{i}(1-A_i)\widehat{P}(A=1|T=t_i) 
    \end{align*}

    This of course still holds for $T=\widehat{X}$, since the exact form of $T$ did not factor into the expression. Then, if the weights are defined to be $|A - \widehat{P}(A=1|\widehat{X})|$, we can see that the ratio of the above two expressions gives us the required form for sample covariate balance. 
\end{proof}

\section{Multistage Simulation Results}
To investigate the procedure in the multistage scenario, we consider a variety of related settings formed by varying different aspects of the model. We take $X_1 \sim N(0,1)$, with $\W_{11} \sim g_1(X_1)$ and $\W_{12} \sim g_2(X_1)$, for error models $g_1, g_2$. We assume that $P(A_1 = 1|\W_1=\w_1) = h_1(\w_1; \alpha_{10}, \alpha_{11})$, with a treatment model $h_1$ and parameters $\alpha_{10}, \alpha_{11}$. We also take $X_2 \sim N(A_1, 1)$, with $\W_{21} = g_1(X_2)$ and $\W_{22} = g_2(X_2)$, and $P(A_2 = 1|\W_2=\w_2) = h_2(\w_2; \alpha_{20}, \alpha_{21})$. The outcome is given by $Y = f(X_1) + (A_1^\text{opt} - A_1)\left(1 + \psi_{11}X_1\right) + (A_2^\text{opt} - A_2)\left(1 + \psi_{21}X_2\right) + \epsilon$, with $\epsilon \sim N(0,1)$, where $f(X_1)$ is the treatment-free model. We consider five scenarios by altering the above parameters. \begin{enumerate}
    \item Considers 10 combinations of $(\alpha_{10}, \alpha_{20})$, values taken from $\{-2, -1, 0, 1, 2\}$, holding the treatment-free model as linear, both treatment models as linear, the error models as classical additive with $N(0, 0.25)$ distribution, $\psi_{11} = \psi_{21} = 1$.
    \item Considers 10 combinations of $(\psi_{11}, \psi_{21})$, values taken from $\{-1, -0.1, 0, 0.1, 1\}$, holding $\alpha_{10} = \alpha_{20} = 0$, the treatment-free model as linear, both treatment models as linear, the error models as classical additive with $N(0, 0.25)$ distribution. 
    \item Considers 5 scenarios for various forms of the treatment-free model, taking $f(X_1) = X_1$ (linear), $f(X_1) = X_1 + X_1^2$ (quadratic), $f(X_1) = X_1 + X_2^2 - X_1^3$ (cubic), $f(X_1) = \exp(X_1) - X_1^3$ (exponential), or $\exp(X_1)I(X_1 >= -0.5)$ (complex). We hold both treatment models to be linear, $\alpha_{10} = \alpha_{20} = 0$, $\psi_{11} = \psi_{21} = 1$, and the error the error models as classical additive with $N(0, 0.25)$ distribution. 
    \item Considers 10 scenarios where the treatment models are taken to be one of $h_j(\w_j) = \alpha_{j0} + \alpha_{j1}\w_j$ (linear), $h_j(\w_j) = \alpha_{j0} + \alpha_{j1}\w_j + (\w_j)^2$ (quadratic), $h_j(\w_j) = \alpha_{j0} + \alpha_{j1}\w_j + \exp(\w_j)$ (exponential), and $h_j(\w_j) = \alpha_{j0} + \alpha_{j1}\w_j + (\w_j)^2 + \exp(\w_j)$ (mixed). We hold the treatment-free model to be linear, $\alpha_{10} = \alpha_{20} = 0$, $\psi_{11} = \psi_{21} = 1$, and the error the error models as classical additive with $N(0, 0.25)$ distribution.
    \item Considers 10 scenarios for various error models, taking $g_j(X_l) = X_l + N(0, 0.25)$ (normal), $g_j(X_l) = X_l + t_{10}$ (approximately normal), $g_j(X_l) = X_l\cdot\text{Gamma}(1, 1)$ (gamma), or $g_j(X_l) = X_l\cdot\text{Unif}(0.5, 1.5)$ (uniform). We hold the treatment-free model to be linear, both treatment models to be linear, $\alpha_{10} = \alpha_{20} = 0$, and $\psi_{11} = \psi_{21} = 1$.
\end{enumerate}

All analyses are conducted where $(\W_1, \W_2)$ is taken to be $(\W_{11}, \W_{21}), (\overline{\W_1}, \overline{\W_2}), (\W_{11}, \overline{\W_2}), (\overline{\W_1}, \W_{21})$ (that is treatment either depends on the first naive proxy, or on the mean of the two proxies). We take $n=10000$ and repeat each scenario $1000$ times. The results for a corrected analysis and a naive analysis are included in Tables \ref{tb:multistage_one}-\ref{tb:multistage_five}.

\begin{table}[ht]
    \caption{Median parameter estimates investigating the impact of treatment probabilities in a multistage DTR, by varying ($\alpha_{10}, \alpha_{20}$) as indicated. Blip parameter estimates are compared for $n=10000$ individuals, using the corrected method compared to a naive analysis. The top set of rows of the table use the first error-prone proxy at both stages, the second set of rows use the mean of proxies at both stages, the third set of rows use the mean at the first stage and the first error-prone proxy at the second, and the final set of rows use the first error-prone proxy at the first stage and the mean at the second. Bold values indicate parameters for which the $95\%$ percentile-based interval across the $1000$ simulation replicates did \textbf{not} cover the true parameter value.}
    \centering
    \begin{tabular}{lcccccccccc}
        \toprule & \multicolumn{4}{c}{Regression Calibration} & & \multicolumn{4}{c}{Naive}\\
        $(\alpha_{10}, \alpha_{20})$ & $A_{1}$ & $A_1X_{1}$ & $A_{2}$ & $A_2X_{2}$ & & $A_{1}$ & $A_1X_{1}$ & $A_{2}$ & $A_2X_{2}$ \\
        \toprule 
        (-2, -2) & 1.0101 & 0.9974 & 1.0035 & 0.9963 &  & 1.0098 & \textbf{0.8874} & 1.0163 & \textbf{0.902} \\
        (-1, -1) & 1.0103 & 0.9983 & 1.0038 & 0.9961 &  & 1.0102 & \textbf{0.8879} & 1.0304 & \textbf{0.9064} \\
        (0, 0) & 1.0106 & 0.9961 & 1.0003 & 0.9995 &  & 1.0108 & \textbf{0.8859} & 1.045 & \textbf{0.9101} \\
        (1, 1) & 1.0086 & 0.996 & 0.998 & 0.9994 &  & 1.0086 & \textbf{0.8857} & \textbf{1.0622} & \textbf{0.909} \\
        (2, 2) & 1.0082 & 0.9983 & 0.9975 & 1.0027 &  & 1.0083 & \textbf{0.8881} & 1.0789 & \textbf{0.9077} \\
        (-2, 0) & 1.01 & 0.9973 & 1.0009 & 0.9987 &  & 1.01 & \textbf{0.8872} & 1.0157 & \textbf{0.8993} \\
        (-1, 1) & 1.0109 & 0.9977 & 0.999 & 0.9986 &  & 1.0109 & \textbf{0.8868} & 1.0267 & \textbf{0.9044} \\
        (0, 2) & 1.011 & 0.9963 & 0.9963 & 0.9987 &  & 1.0105 & \textbf{0.8859} & 1.0413 & \textbf{0.9079} \\
        (1, -2) & 1.0087 & 0.9971 & 1.0028 & 0.9993 &  & 1.0084 & \textbf{0.886} & 1.0704 & \textbf{0.904} \\
        (2, -1) & 1.0084 & 0.9997 & 1.0036 & 0.9975 &  & 1.0079 & \textbf{0.8886} & \textbf{1.0891} & \textbf{0.8982} \\
        \hline
        (-2, -2) & 1.008 & 0.9992 & 1.0034 & 0.9965 &  & 1.0087 & \textbf{0.8887} & 1.0165 & \textbf{0.902} \\
        (-1, -1) & 1.0112 & 0.9988 & 1.0048 & 0.998 &  & 1.0107 & \textbf{0.8877} & 1.031 & \textbf{0.9068} \\
        (0, 0) & 1.0111 & 0.997 & 1.0013 & 0.9967 &  & 1.0111 & \textbf{0.8872} & 1.0454 & \textbf{0.9074} \\
        (1, 1) & 1.0093 & 0.9963 & 0.9975 & 0.9985 &  & 1.009 & \textbf{0.8865} & \textbf{1.0621} & \textbf{0.9083} \\
        (2, 2) & 1.0101 & 0.9993 & 0.9977 & 1.0007 &  & 1.0104 & \textbf{0.8884} & 1.0795 & \textbf{0.9054} \\
        (-2, 0) & 1.0108 & 0.9994 & 1.0011 & 0.9985 &  & 1.0108 & \textbf{0.8884} & 1.0159 & \textbf{0.8991} \\
        (-1, 1) & 1.011 & 0.9982 & 0.9979 & 0.9988 &  & 1.0113 & \textbf{0.8875} & 1.0257 & \textbf{0.9039} \\
        (0, 2) & 1.0119 & 0.9972 & 0.9956 & 0.9972 &  & 1.0118 & \textbf{0.8865} & 1.041 & \textbf{0.9073} \\
        (1, -2) & 1.0091 & 0.9973 & 1.0043 & 0.9976 &  & 1.009 & \textbf{0.8866} & 1.072 & \textbf{0.903} \\
        (2, -1) & 1.01 & 1.0002 & 1.0014 & 0.998 &  & 1.0091 & \textbf{0.8894} & \textbf{1.0871} & \textbf{0.8974} \\
        \hline
        (-2, -2) & 1.0083 & 1.0001 & 1.0038 & 0.9975 &  & 1.0082 & \textbf{0.8895} & 1.0163 & \textbf{0.9018} \\
        (-1, -1) & 1.0107 & 0.999 & 1.0039 & 0.9971 &  & 1.0106 & \textbf{0.8882} & 1.0305 & \textbf{0.9064} \\
        (0, 0) & 1.0116 & 0.9968 & 1.0005 & 0.9982 &  & 1.0118 & \textbf{0.8866} & 1.0453 & \textbf{0.9097} \\
        (1, 1) & 1.0096 & 0.9955 & 0.9981 & 0.9995 &  & 1.0095 & \textbf{0.8862} & \textbf{1.0619} & \textbf{0.909} \\
        (2, 2) & 1.0101 & 0.9994 & 0.9968 & 1.0011 &  & 1.0098 & \textbf{0.8889} & 1.0802 & \textbf{0.9066} \\
        (-2, 0) & 1.0107 & 0.9989 & 1.0008 & 0.9991 &  & 1.0113 & \textbf{0.8886} & 1.0156 & \textbf{0.8998} \\
        (-1, 1) & 1.0113 & 0.9983 & 0.9989 & 0.9995 &  & 1.011 & \textbf{0.8874} & 1.0262 & \textbf{0.9055} \\
        (0, 2) & 1.0116 & 0.9969 & 0.9967 & 0.9976 &  & 1.0115 & \textbf{0.8869} & 1.042 & \textbf{0.9081} \\
        (1, -2) & 1.0093 & 0.9975 & 1.0043 & 0.999 &  & 1.0093 & \textbf{0.8869} & 1.0721 & \textbf{0.9041} \\
        (2, -1) & 1.0098 & 0.9996 & 1.0024 & 0.9975 &  & 1.0097 & \textbf{0.8891} & \textbf{1.0887} & \textbf{0.8974} \\
        \hline
        (-2, -2) & 1.0096 & 0.9977 & 1.0034 & 0.9979 &  & 1.0095 & \textbf{0.8875} & 1.0162 & \textbf{0.9027} \\
        (-1, -1) & 1.0094 & 0.9978 & 1.0038 & 0.9976 &  & 1.0097 & \textbf{0.8872} & 1.0304 & \textbf{0.9067} \\
        (0, 0) & 1.0106 & 0.9963 & 1.0013 & 0.9974 &  & 1.0104 & \textbf{0.8859} & 1.0459 & \textbf{0.9085} \\
        (1, 1) & 1.0085 & 0.9965 & 0.9982 & 1.0001 &  & 1.0084 & \textbf{0.8857} & \textbf{1.0621} & \textbf{0.9093} \\
        (2, 2) & 1.0079 & 0.9983 & 0.9961 & 0.9983 &  & 1.0078 & \textbf{0.8879} & 1.0786 & \textbf{0.9043} \\
        (-2, 0) & 1.0099 & 0.9983 & 1.0007 & 0.9977 &  & 1.01 & \textbf{0.8872} & 1.0155 & \textbf{0.8988} \\
        (-1, 1) & 1.0112 & 0.9976 & 0.9983 & 0.9981 &  & 1.011 & \textbf{0.8866} & 1.0262 & \textbf{0.9043} \\
        (0, 2) & 1.0108 & 0.9962 & 0.9959 & 0.9957 &  & 1.0108 & \textbf{0.8863} & 1.0411 & \textbf{0.906} \\
        (1, -2) & 1.0078 & 0.9973 & 1.0034 & 0.998 &  & 1.0074 & \textbf{0.8864} & 1.071 & \textbf{0.9031} \\
        (2, -1) & 1.0082 & 0.9996 & 1.0036 & 0.9969 &  & 1.0079 & \textbf{0.8887} & \textbf{1.0881} & \textbf{0.8972} \\
        \hline
    \end{tabular}
    \label{tb:multistage_one}
\end{table}

\begin{table}[ht]
    \caption{Median parameter estimates investigating the impact of treatment thresholds in a multistage DTR, by varying ($\psi_{11}, \psi_{21}$) as indicated. Blip parameter estimates are compared for $n=10000$ individuals, using the corrected method compared to a naive analysis. The top set of rows of the table use the first error-prone proxy at both stages, the second set of rows use the mean of proxies at both stages, the third set of rows use the mean at the first stage and the first error-prone proxy at the second, and the final set of rows use the first error-prone proxy at the first stage and the mean at the second. Bold values indicate parameters for which the $95\%$ percentile-based interval across the $1000$ simulation replicates did \textbf{not} cover the true parameter value.}
    \centering
    \begin{tabular}{lccccccccc}
        \toprule & \multicolumn{4}{c}{Regression Calibration} & & \multicolumn{4}{c}{Naive}\\
        $(\psi_{10}, \psi_{11}, \psi_{20}, \psi_{21})$ & $A_{1}$ & $A_1X_{1}$ & $A_{2}$ & $A_2X_{2}$ & & $A_{1}$ & $A_1X_{1}$ & $A_{2}$ & $A_2X_{2}$ \\
        \toprule 
        (1, -1, 1, -1) & 0.9902 & -1.001 & 1.0011 & -1.002 &  & 0.9907 & \textbf{-0.8904} & 0.956 & \textbf{-0.9114} \\
        (1, -0.1, 1, -0.1) & 0.9999 & -0.1017 & 1.0017 & -0.1006 &  & 0.9998 & -0.0904 & 0.9971 & -0.0914 \\
        (1, 0, 1, 0) & 1 & -0.0019 & 1.0014 & -8e-04 &  & 1 & -0.0017 & 1.0016 & -9e-04 \\
        (1, 0.1, 1, 0.1) & 0.9998 & 0.0979 & 1.0016 & 0.099 &  & 0.9998 & 0.0872 & 1.006 & 0.0904 \\
        (1, 1, 1, 1) & 1.0106 & 0.9961 & 1.0003 & 0.9995 &  & 1.0108 & \textbf{0.8859} & 1.045 & \textbf{0.9101} \\
        (1, -1, 1, 0) & 1.0006 & -1.0014 & 1.0017 & -1e-04 &  & 1.0005 & \textbf{-0.89} & 1.0017 & 2e-04 \\
        (1, -0.1, 1, 0.1) & 0.9998 & -0.1018 & 1.0017 & 0.0994 &  & 0.9998 & -0.0905 & 1.0059 & 0.0906 \\
        (1, 0, 1, 1) & 1.0106 & -0.0029 & 1.0017 & 0.9986 &  & 1.0105 & -0.0026 & 1.0461 & \textbf{0.9094} \\
        (1, 0.1, 1, -1) & 0.9918 & 0.0988 & 1.0019 & -1.0028 &  & 0.9918 & 0.0877 & 0.9563 & \textbf{-0.9121} \\
        (1, 1, 1, -0.1) & 0.9998 & 0.9967 & 1.0002 & -0.1017 &  & 1.0004 & \textbf{0.8867} & 0.9958 & -0.0922 \\
        \hline
        (1, -1, 1, -1) & 0.993 & -1.0006 & 0.9997 & -1.0023 &  & 0.9934 & \textbf{-0.8895} & 0.9548 & \textbf{-0.911} \\
        (1, -0.1, 1, -0.1) & 1.0001 & -0.1017 & 0.9993 & -0.1018 &  & 1.0001 & -0.0903 & 0.9951 & -0.0924 \\
        (1, 0, 1, 0) & 1.0006 & -0.0017 & 0.9995 & -0.0017 &  & 1.0006 & -0.0015 & 0.9996 & -0.0018 \\
        (1, 0.1, 1, 0.1) & 1.0007 & 0.0985 & 0.9999 & 0.0976 &  & 1.0007 & 0.0875 & 1.0046 & 0.0889 \\
        (1, 1, 1, 1) & 1.0111 & 0.997 & 1.0013 & 0.9967 &  & 1.0111 & \textbf{0.8872} & 1.0454 & \textbf{0.9074} \\
        (1, -1, 1, 0) & 1.0025 & -1.0015 & 0.9996 & -0.0015 &  & 1.0025 & \textbf{-0.8903} & 0.9994 & -0.0012 \\
        (1, -0.1, 1, 0.1) & 1.0006 & -0.1016 & 0.9993 & 0.0981 &  & 1.0006 & -0.0903 & 1.004 & 0.0895 \\
        (1, 0, 1, 1) & 1.0117 & -0.0015 & 1.0016 & 0.9968 &  & 1.0117 & -0.0013 & 1.0459 & \textbf{0.9076} \\
        (1, 0.1, 1, -1) & 0.9924 & 0.0992 & 1.0005 & -1.0034 &  & 0.9925 & 0.0881 & 0.9549 & \textbf{-0.9123} \\
        (1, 1, 1, -0.1) & 0.9997 & 0.9969 & 1.0006 & -0.1023 &  & 1.0001 & \textbf{0.8871} & 0.9957 & -0.0933 \\
        \hline
        (1, -1, 1, -1) & 0.9926 & -1.0004 & 1.0011 & -1.0041 &  & 0.9927 & \textbf{-0.8893} & 0.9556 & \textbf{-0.913} \\
        (1, -0.1, 1, -0.1) & 1.0015 & -0.1013 & 1 & -0.1008 &  & 1.0015 & -0.09 & 0.9959 & -0.092 \\
        (1, 0, 1, 0) & 1.0015 & -0.0017 & 1.0001 & -7e-04 &  & 1.0015 & -0.0015 & 1 & -8e-04 \\
        (1, 0.1, 1, 0.1) & 1.0018 & 0.0984 & 1.0002 & 0.0991 &  & 1.0017 & 0.0875 & 1.0046 & 0.0902 \\
        (1, 1, 1, 1) & 1.0116 & 0.9968 & 1.0005 & 0.9982 &  & 1.0118 & \textbf{0.8866} & 1.0453 & \textbf{0.9097} \\
        (1, -1, 1, 0) & 1.0023 & -1.0016 & 1.0008 & -0.0011 &  & 1.0022 & \textbf{-0.8901} & 1.0008 & -0.0015 \\
        (1, -0.1, 1, 0.1) & 1.0016 & -0.1016 & 1.0006 & 0.0994 &  & 1.0016 & -0.0903 & 1.0048 & 0.09 \\
        (1, 0, 1, 1) & 1.0121 & -0.0014 & 1.0004 & 0.9982 &  & 1.0121 & -0.0012 & 1.045 & \textbf{0.9097} \\
        (1, 0.1, 1, -1) & 0.9926 & 0.0996 & 0.9999 & -1.0028 &  & 0.9927 & 0.0885 & 0.9549 & \textbf{-0.9121} \\
        (1, 1, 1, -0.1) & 1.001 & 0.9971 & 0.9998 & -0.1007 &  & 1.0009 & \textbf{0.8872} & 0.9952 & -0.0919 \\
        \hline
        (1, -1, 1, -1) & 0.9893 & -1.0008 & 1.0004 & -1.0018 &  & 0.9893 & \textbf{-0.8897} & 0.9554 & \textbf{-0.9112} \\
        (1, -0.1, 1, -0.1) & 0.9992 & -0.1017 & 1.0013 & -0.1014 &  & 0.9992 & -0.0906 & 0.9964 & -0.0928 \\
        (1, 0, 1, 0) & 0.9994 & -0.0019 & 1.0013 & -0.0016 &  & 0.9993 & -0.0017 & 1.001 & -0.0018 \\
        (1, 0.1, 1, 0.1) & 0.9997 & 0.0979 & 1.0012 & 0.0982 &  & 0.9996 & 0.0873 & 1.0058 & 0.0893 \\
        (1, 1, 1, 1) & 1.0106 & 0.9963 & 1.0013 & 0.9974 &  & 1.0104 & \textbf{0.8859} & 1.0459 & \textbf{0.9085} \\
        (1, -1, 1, 0) & 1.0008 & -1.0013 & 0.9997 & -6e-04 &  & 1.0009 & \textbf{-0.8901} & 0.9996 & -8e-04 \\
        (1, -0.1, 1, 0.1) & 0.9997 & -0.1014 & 1.001 & 0.0987 &  & 0.9996 & -0.0903 & 1.0056 & 0.0894 \\
        (1, 0, 1, 1) & 1.0102 & -0.0023 & 1.0011 & 0.9986 &  & 1.0102 & -0.002 & 1.0456 & \textbf{0.9091} \\
        (1, 0.1, 1, -1) & 0.9914 & 0.0986 & 1.0013 & -1.0021 &  & 0.9914 & 0.0876 & 0.9563 & \textbf{-0.9119} \\
        (1, 1, 1, -0.1) & 0.9994 & 0.9966 & 1.0005 & -0.1016 &  & 0.9995 & \textbf{0.8867} & 0.9958 & -0.0925 \\
        \hline
    \end{tabular}
    \label{tb:multistage_two}
\end{table}

\begin{table}[ht]
    \caption{Median parameter estimates investigating the impact of treatment probabilities in a multistage DTR, by varying the true treatment-free model as indicated. Linear treatment-free models are used in all settings. Blip parameter estimates are compared for $n=10000$ individuals, using the corrected method compared to a naive analysis. The top set of rows of the table use the first error-prone proxy at both stages, the second set of rows use the mean of proxies at both stages, the third set of rows use the mean at the first stage and the first error-prone proxy at the second, and the final set of rows use the first error-prone proxy at the first stage and the mean at the second. Bold values indicate parameters for which the $95\%$ percentile-based interval across the $1000$ simulation replicates did \textbf{not} cover the true parameter value.}
    \centering
    \begin{tabular}{lccccccccc}
        \toprule & \multicolumn{4}{c}{Regression Calibration} & & \multicolumn{4}{c}{Naive}\\
        Treatment-Free Model & $A_1$ & $A_1X_{1}$ & $A_{2}$ & $A_2X_{2}$ & & $A_{1}$ & $A_1X_{1}$ & $A_{2}$ & $A_2X_{2}$ \\
        \toprule 
        Linear & 1.0106 & 0.9961 & 1.0003 & 0.9995 &  & 1.0108 & \textbf{0.8859} & 1.045 & \textbf{0.9101} \\
        Quadratic & 1.0105 & 1.0043 & 1.0012 & 0.9982 &  & 1.0104 & \textbf{0.893} & 1.0457 & \textbf{0.9098} \\
        Cubic & 1.0019 & 1.0108 & 1.0041 & 0.9993 &  & 1.0021 & 0.8986 & 1.0484 & 0.9099 \\
        Exponential & 1.0042 & 1.0105 & 1.0035 & 0.9986 &  & 1.0041 & 0.8976 & 1.0476 & 0.9104 \\
        Complex & 1.0125 & 1.0038 & 0.9994 & 1.0007 &  & 1.0127 & 0.8921 & 1.044 & \textbf{0.9116} \\
        \hline
        Linear & 1.0111 & 0.997 & 1.0013 & 0.9967 &  & 1.0111 & \textbf{0.8872} & 1.0454 & \textbf{0.9074} \\
        Quadratic & 1.0106 & 1.0017 & 1.0033 & 0.9959 &  & 1.0107 & \textbf{0.8901} & 1.0483 & \textbf{0.9078} \\
        Cubic & 1.0069 & 1.0043 & 1.0072 & 1.0012 &  & 1.0068 & 0.8938 & 1.0525 & 0.9129 \\
        Exponential & 1.0089 & 1.0066 & 1.0054 & 1.0015 &  & 1.0091 & 0.8939 & 1.0503 & 0.9122 \\
        Complex & 1.0113 & 1.0014 & 1.0035 & 0.9969 &  & 1.0116 & \textbf{0.8903} & 1.0467 & \textbf{0.9082} \\
        \hline
        Linear & 1.0116 & 0.9968 & 1.0005 & 0.9982 &  & 1.0118 & \textbf{0.8866} & 1.0453 & \textbf{0.9097} \\
        Quadratic & 1.0116 & 1.001 & 1.0005 & 0.9985 &  & 1.0115 & \textbf{0.8897} & 1.044 & \textbf{0.9093} \\
        Cubic & 1.0075 & 1.0034 & 1.0038 & 0.9967 &  & 1.0075 & 0.8937 & 1.0484 & 0.9095 \\
        Exponential & 1.0091 & 1.0069 & 1.0033 & 0.9996 &  & 1.0087 & 0.8949 & 1.0477 & 0.9107 \\
        Complex & 1.0119 & 1.0026 & 1 & 1.0012 &  & 1.0116 & \textbf{0.8904} & 1.0445 & \textbf{0.9108} \\
        \hline
        Linear & 1.0106 & 0.9963 & 1.0013 & 0.9974 &  & 1.0104 & \textbf{0.8859} & 1.0459 & \textbf{0.9085} \\
        Quadratic & 1.01 & 1.004 & 1.003 & 0.997 &  & 1.0104 & \textbf{0.8927} & 1.0472 & \textbf{0.9078} \\
        Cubic & 1.0017 & 1.0101 & 1.0044 & 0.9992 &  & 1.0021 & 0.8974 & 1.0504 & 0.9104 \\
        Exponential & 1.0046 & 1.0097 & 1.0061 & 0.9996 &  & 1.0045 & 0.8975 & 1.0503 & 0.9104 \\
        Complex & 1.0115 & 1.0039 & 1.0015 & 0.9985 &  & 1.0115 & 0.8926 & 1.0457 & \textbf{0.9091} \\
        \hline
    \end{tabular}
    \label{tb:multistage_three}
 \end{table}

\begin{table}[ht]
    \caption{Median parameter estimates investigating the impact of treatment probabilities in a multistage DTR, by varying the treatment models as indicated. Linear treatment models are used in all situations. Blip parameter estimates are compared for $n=10000$ individuals, using the corrected method compared to a naive analysis. The top set of rows of the table use the first error-prone proxy at both stages, the second set of rows use the mean of proxies at both stages, the third set of rows use the mean at the first stage and the first error-prone proxy at the second, and the final set of rows use the first error-prone proxy at the first stage and the mean at the second. Bold values indicate parameters for which the $95\%$ percentile-based interval across the $1000$ simulation replicates did \textbf{not} cover the true parameter value.}
    \centering
    \begin{tabular}{lccccccccc}
        \toprule & \multicolumn{4}{c}{Regression Calibration} & & \multicolumn{4}{c}{Naive}\\
        Treatment Models & $A_1$ & $A_1X_{1}$ & $A_{2}$ & $A_2X_{2}$ & & $A_{1}$ & $A_1X_{1}$ & $A_{2}$ & $A_2X_{2}$ \\
        \toprule 
        Linear/Linear & 1.0106 & 0.9961 & 1.0003 & 0.9995 &  & 1.0108 & \textbf{0.8859} & 1.045 & \textbf{0.9101} \\
        Linear/Quadratic & 0.9992 & 0.9992 & \textbf{0.8902} & \textbf{1.1138} &  & 0.9991 & \textbf{0.8881} & \textbf{0.9407} & 1.014 \\
        Linear/Mixed & 0.9925 & 1 & \textbf{0.9022} & \textbf{1.1187} &  & 0.9923 & \textbf{0.8893} & 0.9529 & 1.0174 \\
        Linear/Exponential & 1.0126 & 0.9989 & 0.9836 & 1.016 &  & 1.0124 & \textbf{0.8876} & 1.0291 & 0.9257 \\
        Quadratic/Quadratic & \textbf{0.9047} & \textbf{1.1117} & \textbf{0.8929} & \textbf{1.1153} &  & \textbf{0.9044} & 0.9885 & 0.9611 & 1.0159 \\
        Quadratic/Mixed & \textbf{0.8969} & \textbf{1.1119} & \textbf{0.9011} & \textbf{1.1188} &  & \textbf{0.897} & 0.9886 & 0.9707 & 1.0199 \\
        Quadratic/Exponential & \textbf{0.918} & \textbf{1.1125} & 0.9846 & 1.0189 &  & \textbf{0.918} & 0.9899 & 1.0478 & 0.9289 \\
        Mixed/Mixed & \textbf{0.9033} & \textbf{1.1052} & \textbf{0.9078} & 1.1216 &  & \textbf{0.9033} & 0.9824 & 0.9946 & 1.0192 \\
        Mixed/Exponential & \textbf{0.9246} & \textbf{1.1096} & 0.9853 & 1.0153 &  & \textbf{0.9244} & 0.9872 & 1.0652 & 0.9223 \\
        Exponential/Exponential & 0.9979 & 1.013 & 0.9853 & 1.0131 &  & 0.9983 & \textbf{0.9007} & 1.0479 & 0.9244 \\
        \hline
        Linear/Linear & 1.0111 & 0.997 & 1.0013 & 0.9967 &  & 1.0111 & \textbf{0.8872} & 1.0454 & \textbf{0.9074} \\
        Linear/Quadratic & 1.0013 & 0.9987 & \textbf{0.8847} & \textbf{1.1219} &  & 1.0012 & \textbf{0.8876} & \textbf{0.9353} & 1.0215 \\
        Linear/Mixed & 0.9915 & 0.9979 & \textbf{0.8915} & \textbf{1.1179} &  & 0.9916 & \textbf{0.8868} & 0.9413 & 1.0179 \\
        Linear/Exponential & 1.0128 & 0.999 & 0.9818 & 1.0194 &  & 1.013 & \textbf{0.8882} & 1.0271 & 0.9275 \\
        Quadratic/Quadratic & \textbf{0.898} & \textbf{1.1245} & \textbf{0.8863} & \textbf{1.1239} &  & \textbf{0.8982} & 0.9994 & 0.9547 & 1.0234 \\
        Quadratic/Mixed & \textbf{0.8889} & \textbf{1.1226} & \textbf{0.8943} & \textbf{1.1241} &  & \textbf{0.8887} & 0.9981 & 0.9629 & 1.0257 \\
        Quadratic/Exponential & \textbf{0.9089} & \textbf{1.121} & 0.9809 & 1.021 &  & \textbf{0.9087} & 0.9965 & 1.0429 & 0.932 \\
        Mixed/Mixed & \textbf{0.8941} & \textbf{1.1119} & \textbf{0.9053} & 1.1241 &  & \textbf{0.8938} & 0.9878 & 0.9923 & 1.0228 \\
        Mixed/Exponential & \textbf{0.9142} & \textbf{1.1125} & 0.9834 & 1.0242 &  & \textbf{0.9141} & 0.9884 & 1.0622 & 0.9313 \\
        Exponential/Exponential & 0.9945 & 1.0162 & 0.9831 & 1.0186 &  & 0.9948 & \textbf{0.9021} & 1.0474 & 0.9294 \\
        \hline
        Linear/Linear & 1.0116 & 0.9968 & 1.0005 & 0.9982 &  & 1.0118 & \textbf{0.8866} & 1.0453 & \textbf{0.9097} \\
        Linear/Quadratic & 1.0012 & 0.9979 & \textbf{0.8904} & \textbf{1.1137} &  & 1.0008 & \textbf{0.8876} & \textbf{0.9412} & 1.0139 \\
        Linear/Mixed & 0.9922 & 0.9985 & \textbf{0.8999} & \textbf{1.1146} &  & 0.9923 & \textbf{0.8881} & 0.9495 & 1.0143 \\
        Linear/Exponential & 1.011 & 0.9982 & 0.9823 & 1.0161 &  & 1.0112 & \textbf{0.8869} & 1.0281 & 0.9263 \\
        Quadratic/Quadratic & \textbf{0.8978} & \textbf{1.1239} & \textbf{0.8921} & \textbf{1.1151} &  & \textbf{0.8982} & 0.9981 & 0.9598 & 1.0149 \\
        Quadratic/Mixed & \textbf{0.8904} & \textbf{1.1216} & \textbf{0.9021} & \textbf{1.1175} &  & \textbf{0.8903} & 0.9979 & 0.9703 & 1.0182 \\
        Quadratic/Exponential & \textbf{0.9086} & \textbf{1.1215} & 0.9848 & 1.0112 &  & \textbf{0.9083} & 0.9972 & 1.0464 & 0.923 \\
        Mixed/Mixed & \textbf{0.8932} & \textbf{1.1145} & \textbf{0.9077} & 1.1196 &  & \textbf{0.8937} & 0.9898 & 0.9943 & 1.0184 \\
        Mixed/Exponential & \textbf{0.9133} & \textbf{1.1123} & 0.9854 & 1.0169 &  & \textbf{0.9126} & 0.9888 & 1.0637 & 0.9247 \\
        Exponential/Exponential & 0.996 & 1.0178 & 0.9844 & 1.0135 &  & 0.9955 & \textbf{0.905} & 1.0486 & 0.9253 \\
        \hline
        Linear/Linear & 1.0106 & 0.9963 & 1.0013 & 0.9974 &  & 1.0104 & \textbf{0.8859} & 1.0459 & \textbf{0.9085} \\
        Linear/Quadratic & 1.0003 & 0.9973 & \textbf{0.884} & \textbf{1.1229} &  & 0.9997 & \textbf{0.8868} & \textbf{0.9348} & 1.0231 \\
        Linear/Mixed & 0.9908 & 0.998 & \textbf{0.8919} & \textbf{1.1214} &  & 0.9904 & \textbf{0.8869} & 0.9419 & 1.0213 \\
        Linear/Exponential & 1.0111 & 0.9987 & 0.9803 & 1.0186 &  & 1.0113 & \textbf{0.888} & 1.0264 & 0.9292 \\
        Quadratic/Quadratic & \textbf{0.9049} & \textbf{1.1118} & \textbf{0.8872} & \textbf{1.1245} &  & \textbf{0.9041} & 0.9884 & 0.9563 & 1.0241 \\
        Quadratic/Mixed & \textbf{0.8951} & \textbf{1.1141} & \textbf{0.8978} & 1.1272 &  & \textbf{0.8956} & 0.9901 & 0.9677 & 1.0279 \\
        Quadratic/Exponential & \textbf{0.9178} & \textbf{1.1119} & 0.9811 & 1.0225 &  & \textbf{0.9179} & 0.9884 & 1.0437 & 0.9333 \\
        Mixed/Mixed & \textbf{0.9008} & \textbf{1.1039} & \textbf{0.9066} & 1.1307 &  & \textbf{0.901} & 0.9811 & 0.9946 & 1.0289 \\
        Mixed/Exponential & \textbf{0.9248} & \textbf{1.1063} & 0.9843 & 1.023 &  & \textbf{0.9249} & 0.9835 & 1.0644 & 0.9311 \\
        Exponential/Exponential & 0.9971 & 1.0136 & 0.984 & 1.0232 &  & 0.997 & \textbf{0.9008} & 1.0482 & 0.9345 \\
        \hline
    \end{tabular}
    \label{tb:multistage_four}
\end{table}

\begin{table}[ht]
    \caption{Median parameter estimates investigating the impact of treatment probabilities in a multistage DTR, by varying the error-models as indicated. Blip parameter estimates are compared for $n=10000$ individuals, using the corrected method compared to a naive analysis. The top set of rows of the table use the first error-prone proxy at both stages, the second set of rows use the mean of proxies at both stages, the third set of rows use the mean at the first stage and the first error-prone proxy at the second, and the final set of rows use the first error-prone proxy at the first stage and the mean at the second. Bold values indicate parameters for which the $95\%$ percentile-based interval across the $1000$ simulation replicates did \textbf{not} cover the true parameter value.}
    \centering
    \begin{tabular}{lccccccccc}
        \toprule & \multicolumn{4}{c}{Regression Calibration} & & \multicolumn{4}{c}{Naive}\\
        Error Models & $A_1$ & $A_1X_{1}$ & $A_{2}$ & $A_2X_{2}$ & & $A_{1}$ & $A_1X_{1}$ & $A_{2}$ & $A_2X_{2}$ \\
        \toprule 
        Normal/Normal & 1.0106 & 0.9961 & 1.0003 & 0.9995 &  & 1.0108 & \textbf{0.8859} & 1.045 & \textbf{0.9101} \\
        Normal/Approx. Normal & 1.0175 & 0.9954 & 1.0007 & 0.9975 &  & 1.0174 & \textbf{0.8248} & \textbf{1.0705} & \textbf{0.8586} \\
        Normal/Gamma & 1.0109 & 1.0292 & 1.0036 & 1.0078 &  & 1.0103 & \textbf{0.8577} & \textbf{1.0749} & \textbf{0.8647} \\
        Normal/Uniform & 1.0062 & 1.034 & 1.0028 & 1.0226 &  & 1.0062 & 0.9729 & 1.0339 & 0.9616 \\
        Approx. Normal/Approx. Normal & 1.0428 & 0.996 & 0.9989 & 0.9916 &  & 1.0414 & \textbf{0.6132} & \textbf{1.1612} & \textbf{0.6666} \\
        Approx. Normal/Gamma & 1.0187 & \textbf{1.1453} & 1.0069 & 1.0592 &  & 1.019 & \textbf{0.7388} & \textbf{1.1893} & \textbf{0.7095} \\
        Approx. Normal/Uniform & 1.0109 & 1.0405 & 1.0028 & 1.0318 &  & 1.0107 & 0.9659 & 1.0443 & 0.9495 \\
        Gamma/Gamma & \textbf{1.0866} & \textbf{1.2838} & \textbf{0.919} & \textbf{1.202} &  & \textbf{1.086} & \textbf{0.857} & \textbf{1.1308} & \textbf{0.8097} \\
        Gamma/Uniform & 1.0157 & 1.0608 & 0.9766 & 1.0494 &  & 1.0156 & 0.9842 & 1.0198 & 0.9638 \\
        Uniform/Uniform & 1.0075 & 1.0367 & 0.991 & 1.043 &  & 1.0076 & 0.9957 & 1.0152 & 0.9957 \\
        \hline
        Normal/Normal & 1.0111 & 0.997 & 1.0013 & 0.9967 &  & 1.0111 & \textbf{0.8872} & 1.0454 & \textbf{0.9074} \\
        Normal/Approx. Normal & 1.0173 & 0.9968 & 1.0002 & 0.9994 &  & 1.0165 & \textbf{0.8249} & \textbf{1.0706} & \textbf{0.8586} \\
        Normal/Gamma & 1.0247 & 1.0323 & 0.9832 & 1.0201 &  & 1.0248 & \textbf{0.8602} & \textbf{1.0552} & \textbf{0.877} \\
        Normal/Uniform & 1.0103 & 1.0358 & 0.995 & 1.0339 &  & 1.0098 & 0.9747 & 1.0269 & 0.9718 \\
        Approx. Normal/Approx. Normal & 1.0393 & 1.008 & 0.9994 & 0.9984 &  & 1.0384 & \textbf{0.6209} & \textbf{1.163} & \textbf{0.6743} \\
        Approx. Normal/Gamma & 1.0537 & \textbf{1.1959} & 0.9615 & \textbf{1.1174} &  & 1.0534 & \textbf{0.7688} & \textbf{1.1499} & \textbf{0.7544} \\
        Approx. Normal/Uniform & 1.0104 & 1.0498 & 0.9957 & 1.0521 &  & 1.0101 & 0.9743 & 1.0383 & 0.9692 \\
        Gamma/Gamma & \textbf{1.0951} & \textbf{1.4009} & \textbf{0.9042} & \textbf{1.3366} &  & \textbf{1.0947} & 0.9351 & \textbf{1.1268} & 0.952 \\
        Gamma/Uniform & 1.0161 & 1.0605 & 0.9816 & 1.0662 &  & 1.0159 & 0.9859 & 1.025 & 0.9803 \\
        Uniform/Uniform & 1.0079 & 1.033 & 0.9914 & 1.0418 &  & 1.0077 & 0.9923 & 1.0154 & 0.9942 \\
        \hline
        Normal/Normal & 1.0116 & 0.9968 & 1.0005 & 0.9982 &  & 1.0118 & \textbf{0.8866} & 1.0453 & \textbf{0.9097} \\
        Normal/Approx. Normal & 1.0177 & 0.9966 & 1.0008 & 0.9977 &  & 1.0171 & \textbf{0.8257} & \textbf{1.0708} & \textbf{0.8578} \\
        Normal/Gamma & 1.0263 & 1.0326 & 1.0028 & 1.0074 &  & 1.0262 & \textbf{0.86} & \textbf{1.0744} & \textbf{0.8649} \\
        Normal/Uniform & 1.0112 & 1.036 & 1.0029 & 1.0235 &  & 1.0112 & 0.9746 & 1.0344 & 0.9619 \\
        Approx. Normal/Approx. Normal & 1.0372 & 1.0098 & 0.9988 & 0.9909 &  & 1.037 & \textbf{0.6207} & \textbf{1.162} & \textbf{0.6673} \\
        Approx. Normal/Gamma & 1.0495 & \textbf{1.1943} & 1.007 & 1.058 &  & 1.0493 & \textbf{0.7689} & \textbf{1.1893} & \textbf{0.7076} \\
        Approx. Normal/Uniform & 1.0116 & 1.0498 & 1.0027 & 1.0313 &  & 1.0117 & 0.9742 & 1.0453 & 0.9487 \\
        Gamma/Gamma & \textbf{1.072} & \textbf{1.4024} & \textbf{0.9193} & \textbf{1.2012} &  & \textbf{1.0719} & 0.9346 & \textbf{1.1294} & \textbf{0.809} \\
        Gamma/Uniform & 1.0161 & 1.0601 & 0.9755 & 1.0494 &  & 1.016 & 0.9857 & 1.0197 & 0.9645 \\
        Uniform/Uniform & 1.0078 & 1.0329 & 0.9916 & 1.0433 &  & 1.0078 & 0.9924 & 1.0156 & 0.9957 \\
        \hline
        Normal/Normal & 1.0106 & 0.9963 & 1.0013 & 0.9974 &  & 1.0104 & \textbf{0.8859} & 1.0459 & \textbf{0.9085} \\
        Normal/Approx. Normal & 1.017 & 0.9955 & 1.0009 & 0.9985 &  & 1.0166 & \textbf{0.8249} & \textbf{1.0704} & \textbf{0.8581} \\
        Normal/Gamma & 1.0095 & 1.0288 & 0.9834 & 1.0218 &  & 1.0089 & \textbf{0.8568} & \textbf{1.0569} & \textbf{0.8763} \\
        Normal/Uniform & 1.0047 & 1.034 & 0.9952 & 1.0332 &  & 1.0048 & 0.9736 & 1.0266 & 0.9711 \\
        Approx. Normal/Approx. Normal & 1.0441 & 0.997 & 0.9987 & 0.9996 &  & 1.0439 & \textbf{0.6129} & \textbf{1.1602} & \textbf{0.6746} \\
        Approx. Normal/Gamma & 1.0232 & \textbf{1.1462} & 0.9612 & \textbf{1.1165} &  & 1.023 & \textbf{0.7378} & \textbf{1.1498} & \textbf{0.7538} \\
        Approx. Normal/Uniform & 1.0101 & 1.0408 & 0.9958 & 1.0529 &  & 1.0102 & 0.9655 & 1.0377 & 0.969 \\
        Gamma/Gamma & \textbf{1.1095} & \textbf{1.2825} & \textbf{0.904} & \textbf{1.337} &  & \textbf{1.1087} & \textbf{0.8556} & \textbf{1.1267} & 0.9509 \\
        Gamma/Uniform & 1.016 & 1.0603 & 0.9816 & 1.0659 &  & 1.0159 & 0.9845 & 1.0256 & 0.9818 \\
        Uniform/Uniform & 1.0074 & 1.0369 & 0.9916 & 1.0424 &  & 1.0074 & 0.9957 & 1.0157 & 0.9946 \\
        \hline
    \end{tabular}
    \label{tb:multistage_five}
\end{table}

\end{document}